\begin{document}
\title{Reliable Broadcast in Dynamic Networks with Locally Bounded Byzantine Failures \thanks{This work was performed within Project ESTATE (Ref. ANR-16-CE25-0009-03), supported by French state funds managed by the ANR (Agence Nationale de la Recherche), and it has has been partially supported by the INOCS Sapienza Ateneo 2017 Project (protocol number RM11715C816CE4CB).	Giovanni Farina thanks the \emph{Universit\'e Franco-Italienne/Universit\'a Italo-Francese} (UFI/UIF) for supporting his mobility through the Vinci grant 2018.}}
\author{Silvia Bonomi\inst{1} \and
	Giovanni Farina\inst{2,1} \Letter \and
	S\'ebastien Tixeuil\inst{2}}
\institute{
	Dipartimento di Ingegneria Informatica Automatica e Gestionale Antonio Ruberti,\\
	Sapienza Universit\`a di Roma, Rome, Italy\\
	\email{bonomi@diag.uniroma1.it}
	\and
	Sorbonne Universit\'e,\\
	CNRS, Laboratoire d'Informatique de Paris 6, LIP6,
	F-75005 Paris, France
	\email{\{giovanni.farina,sebastien.tixeuil\}@lip6.fr}}

\maketitle              
%

\begin{abstract}
	Ensuring reliable communication despite possibly malicious participants is a primary objective in any distributed system or network. In this paper, we investigate the possibility of reliable broadcast in a dynamic network whose topology may evolve while the broadcast is in progress. 
	In particular, we adapt the Certified Propagation Algorithm (CPA) to make it work on dynamic networks and we present conditions (on the underlying dynamic graph) to enable safety and liveness properties of the reliable broadcast. We furthermore explore the complexity of assessing these conditions for various classes of dynamic networks. 	
	\keywords{Byzantine Reliable Broadcast \and Locally bounded failures \and Dynamic Networks.}
\end{abstract}

\section{Introduction}
\noindent
Designing dependable and secure systems and networks that are able to cope with various types of adversaries, ranging from simple errors to internal or external attackers, requires to integrate those risks from the very early design stages. The most general attack model in a distributed setting is the Byzantine model, where a subset of nodes participating in the system may behave arbitrarily (including  in a malicious manner), while the rest of processes remain correct. Also, reliable communication primitives are a core building block of any distributed software. Finally, as current applications are run for extended periods of time with expected high availability, it becomes mandatory to integrate dynamic changes in the underlying network while the application is running.  
In this paper, we address the reliable \emph{broadcast} problem (where a \emph{source} node must send data to every other node) in the context of dynamic networks (whose topology may change while the broadcast is in progress) that are subject to Byzantine failures (a subset of the nodes may act arbitrarily).
The reliable broadcast primitive is expected to provide two guarantees: \emph{(i)} \emph{safety}, namely if a message $m$ is delivered by a correct process, then $m$ was sent by the source and \emph{(ii)} \emph{liveness}, namely if a message $m$ is sent by the source, it is eventually delivered by every correct process.

\noindent\textbf{Related Works.}
%
In static multi-hop networks (in which the topology remains fixed during the entire execution of the protocol) the necessary and sufficient condition enabling reliable broadcast while the maximum number of Byzantine failure is bounded by $f$ has been identified by Dolev~\cite{dolev1981unanimity}, stating that this problem can be solved if and only if the network is $2f+1$-connected. 
Subsequently, the reliable broadcast problem has been analyzed assuming a local condition on the number of Byzantine neighbors a node may have~\cite{koo2004broadcast,pelc2005broadcasting}.
All aforementioned works require high network connectivity.
Indeed, extending a reliable broadcast service to sparse networks required to weaken the achieved guarantees~\cite{DBLP:journals/ppl/MaurerT16, DBLP:journals/tpds/MaurerT15, DBLP:journals/jpdc/MaurerT14}: \emph{(i)} accepting that a small minority of correct nodes may accept invalid messages (thus compromising safety), or accepting that a small minority of correct nodes may not deliver genuine messages (thus compromising liveness).

Adapting to dynamic networks proved difficult, as the topology assumptions made by the mentioned proposals may no longer hold: the network changes during the execution. 
Some core problems of distributed computing have been considered in the context of dynamic networks subject to Byzantine failures~\cite{DBLP:conf/podc/GuerraouiHK13, DBLP:conf/podc/AugustinePR13} but, to the best of our knowledge, there exists a single contribution for the reliable communication problem, due to Maurer \emph{et al.}~\cite{maurer2015communicating}. Their work can be seen as the dynamic network extension of the Dolev~\cite{dolev1981unanimity} solution for static networks, and assumes that no more than $f$ Byzantine processes {are present} in the network. Also, the protocol to be executed {spreads an exponential number of messages with respect to the size of the network} and requires each node to compute the minimal cut over the set of paths traversed by each received message, 
making the protocol unpractical 
for real applications.

The Byzantine tolerant reliable broadcast can also be solved by employing cryptography (e.g., digital signatures) \cite{castro1999practical, drabkin2005efficient} that enable all nodes to exchange messages guaranteeing authentication and integrity. 
The main advantage of cryptographic protocols is that they allow solving the problem with simpler solutions and weaker conditions (in terms of connectivity requirements). However, on the negative side, the safety of the protocols is bounded to the crypto-system.

\noindent\textbf{Contributions.}
In this paper, we investigate the possibility of reliable broadcast in a dynamic network that is subject to Byzantine faults. 
More precisely, we address the possibility of a local criterion on the number of Byzantine (as opposed to a global criterion as in Maurer \emph{et al.}~\cite{maurer2015communicating}) in the hope
that a practically efficient protocol can be derived in case the criterion is satisfied. Our starting point is the CPA protocol~\cite{koo2004broadcast,pelc2005broadcasting,DBLP:journals/ipl/TsengVB15,DBLP:journals/tpds/BhandariV10}, that was originally designed for static networks. In particular, our contributions can be summarized as follows: 
	 \emph{(i)} we extend the CPA algorithm to make it work in dynamic networks;
	 \emph{(ii)} we prove that the original safety property of CPA naturally extends to 
	 {dynamic networks}
	 and we define new liveness conditions 
	 specifically suited for the dynamic networks and
	 \emph{(iii)} we investigate the impact of nodes awareness about the dynamic network on reliable broadcast possibility and efficiency.
%

\section{System Model \& Problem Statement}\label{sec:sys_mod}
\setlength{\parskip}{0em}
We consider a distributed system composed by a set of $n$ processes $\Pi=\{p_1, p_2,\\ \dots p_n\}$, each one having a unique integer identifier. 
The passage of time is measured according to a fictional global clock spanning over natural numbers $\mathbb{N}$. 
The processes are arranged in a multi-hop communication network. The network can be seen as an undirected graph where each node represents a process $p_i \in \Pi$ and each edge represents a communication channel between two elements $p_i, p_j \in \Pi$ such that $p_i$ and $p_j$ can communicate. \\
\noindent\textbf{Dynamic Network Model.} The communication network is \emph{dynamic} i.e., the set of edges (or available communication channels) changes over time.
More formally, we model the network as a \emph{Time Varying Graph} (TVG) \cite{casteigts2012time} i.e., a graph $\mathcal{G}=( V,E,\rho,\zeta)$ where:
\begin{itemize}
	\item $V$ is the set of processes (in our case $V=\Pi$);
	\item $E \subseteq V \times V$ is the set of edges (i.e., communication channels).
	\item $\rho : E \times \mathbb{N} \rightarrow \{0,1\}$ is the \emph{presence} function. Given an edge $e_{i, j}$ between two nodes $p_i$ and $p_j$, $\rho(e_{i, j},t) = 1$ indicates that edge $e_{i, j}$ is present at time $t$;
	\item $\zeta : E \times \mathbb{N} \rightarrow \mathbb{N}$ is the \emph{latency} function that indicates how much time is needed to cross an edge starting from a given time $t$. In particular, $\zeta(e_{i, j},t) = \delta_{i, j,t}$ indicates that a message $m$ sent at time $t$ from $p_i$ to $p_j$ takes $\delta_{i, j,t}$ time units to cross edge $e_{i, j}$.
\end{itemize}
The evolution of $\mathcal{G}$ can also be described as a sequence of static graphs $\mathcal{S_{\mathcal{G}}} = G_0, G_1, \dots G_T$ where $G_i$ corresponds to the \emph{snapshot} of $\mathcal{G}$ at time $t_i$
(i.e. $G_i=(V, E_i)$ where $E_{i}=\{e \in E~|~ \rho(e,t_i) = 1\}$).	
No further assumptions on the evolution of the dynamic network are made.
The static graph $G = (V, E)$ that considers all the processes and all the possible existing edges is called \emph{underlying graph} of $\mathcal{G}$ and it flattens the time dimension indicating only the pairs of nodes that have been connected at some time $t'$.
In the following, we interchangeably use terms \emph{process} and \emph{node} and we will refer to \emph{edges} and \emph{communication channels} interchangeably.
Let us note that the TVG model is one among the most general available and it is able to abstract and characterize several real dynamic networks \cite{casteigts2012time}.

\noindent\textbf{Communication model and Timing assumption.} 
%
Processes communicate through message exchanges. Every message has (i) a \emph{source}, which is the id of the process that has created the message 
and (ii) a \emph{sender}, that is the id of the process that is relaying the message. 
The source and the sender may coincide.  
The sender is always a neighbor in the communication network. 
The ID of the source is included inside the message, i.e. any message is composed by its content and the source ID.
We refer with $m_s$ to a message $m$ with $p_s$ as source.

We assume \emph{authenticated} and \emph{reliable} point-to-point channels
where (a) \emph{authenticated} ensures that the identity of the sender cannot be forged; (b) \emph{reliable} guarantees that the channel delivers a message $m$ if and only if (i) $m$ was previously sent by 
its sender and (ii) the channel has been up long enough to allow the reception (i.e. given a message $m$ sent at time $t$ from $p_i$ to $p_j$ and having latency $\delta_{i, j, t}$, we will have reliable delivery if $\rho(e_{i, j},\tau) = 1$ for each $\tau \in [t, t+\delta_{i, j, t}]$).
Notice that these channel assumptions are implicitly made also on analysis of CPA on static networks
and that they are both essential to guarantees the reliable broadcast properties. 

At every time unit $t$ each process takes the following actions: (i) \emph{send} where processes send all the messages for the current time unit (potentially none), (ii) \emph{receive} where processes receive and store all the messages for the current time unit (potentially none) and (iii) \emph{computation} where processes process the buffer of received messages and compute the messages to be sent during the next time unit according to the deterministic distributed protocol $\mathcal{P}$ that they are executing. 
Thus, the system is assumed to be synchronous in the sense that (i) every channel has a latency function that is bounded and the overall message delivery time is bounded by the maximum channel latency and (ii) computation steps are bounded by a constant that is negligible with respect to the overall message delivery time and we consider it equal to $0$.
We discuss the implications and consequences of lack of synchrony inside the full version paper.
%

\noindent\textbf{Failure model.} 
We assume an omniscient adversary able to control several processes of the network allowing them to behave arbitrarily (including corrupting/dropping messages or simply crashing). We call them \emph{Byzantine} processes.
Processes that are not Byzantine faulty are said to be \emph{correct}. Correct processes do not a priori know which processes are Byzantine.
Specifically to reliable broadcast protocols, a Byzantine process can spread messages carrying a fake source ID and/or content or it can drop any received message preventing its propagation.

We considered the \emph{f-locally bounded} failure model \cite{koo2004broadcast} as all CPA related works, i.e., along time every process $p_i$ can be connected with at most $f$ Byzantine processes.
In other words, given the underlying static graph $G=(V, E)$, every process $p_i \in V$ has at most $f$ Byzantine neighbors in $G$.\\
\noindent{\bf Problem Statement.}
%
In this paper, we consider the problem of \emph{Reliable Broadcast over dynamic networks} assuming a $f$-locally bounded Byzantine failure model from a given \emph{correct} \footnote{note the assumption of a possibly faulty source leads to a more general problem, the \emph{Byzantine Agreement} \cite{dolev1981unanimity}} source $p_s$.
We say that a protocol $\mathcal{P}$ satisfies \emph{reliable broadcast}, if a message $m$ broadcast by a correct process $p_s \in \Pi$ (also called \emph{source} or \emph{author}) is eventually delivered
(i.e., accepted as a valid message)
by every correct process $p_j \in \Pi$. 
Said differently, a protocol $\mathcal{P}$ satisfies \emph{reliable broadcast}, if the following conditions are met:
\begin{itemize}
	\item {\bf Safety} if a message $m$ is delivered by a correct process, then such message has been sent by {the source} $p_s$;
	\item {\bf Liveness}: if a message $m$ is broadcast by {the source} $p_s$, it is eventually delivered by every correct process.
\end{itemize}
In other words, a reliable broadcast protocol extends the guarantees provided by the communication channels to the message exchanges between a node and any correct process not directly connected to it.

\section{The Certified Propagation Algorithm (CPA)}
\label{sec:CPA}
\noindent
The Certified Propagation Algorithm (CPA) \cite{koo2004broadcast,pelc2005broadcasting} is a protocol enforcing reliable broadcast, from a correct source $p_s$, in \emph{static} multi-hop networks with a $f$-locally bounded Byzantine adversary model, where nodes have no knowledge on the global network topology. 
Given a message $m$ to be broadcast, CPA starts the propagation of $m_s$ from $p_s$ and applies three acceptance policies (denoted by \emph{AC}) to decide if $m_s$ should be accepted and forwarded (\emph{i.e.}, transmitted also by nodes different from the source) by a process $p_j$.
Specifically:
\begin{itemize}
	\item [-] $p_s$ delivers $m_s$ {\bf (AC1)}, forwards it to all of its neighbors, and stops; 
	\item [-] when receiving $m_s$ from $p_i$, if $p_i$ is the source then $p_j$ delivers $m_s$ {\bf (AC2)}, forwards $m_s$ to all of its neighbors and stops; otherwise 
	the message is buffered. 
	\item [-]  upon receiving $f+1$ copies of $m_s$ from distinct neighbors, $p_j$ delivers $m_s$ {\bf (AC3)}, then forwards it to all its neighbors and stops. 
\end{itemize}

The correctness of CPA on static networks has been proved to be dependent on the network topology. In particular, Litsas \emph{et al.}~\cite{litsas2013graph} provided topological conditions based on the concept of \emph{$k$-level ordering}. Informally, given a graph $G=(V, E)$ and considering a node $p_s$ as the source, we can define a $k$-level ordering as a partition of nodes into \emph{ordered levels} such that: \emph{(i)} $p_s$ belongs to level $L_0$, \emph{(ii)} all the neighbors of $p_s$ belong to level $L_1$, and \emph{iii)} each node in a level
$L_i$ has at least $k$ neighbors {over} levels $L_j$, with $j<i$.
A $k$-level ordering is \emph{minimum} if every node appears in the minimum level possible.
\begin{definition}[MKLO]
Let $G=(V, E)$ be a graph and let $p_s$ be a node of $G$ called \emph{source}. 
The \emph{minimum $k$-level ordering} \emph{(MKLO)} of $G$ from $p_s$ is the partition $P_k$ of nodes into disjoint subsets called levels $L_i$ defined as follows: 
$$
\begin{cases}
		p \in L_0 & $if $ p=p_s \\
		p \in L_1 & $if $ p \in N_s\\
		p \in L_{i>1} & $if $p \in V \setminus (\bigcup\limits_{j=0}^{i-1} L_j) $ and $ |N_p \cap (\bigcup\limits_{j=0}^{i-1} L_j )| \geq k\\
\end{cases}
$$
\end{definition}

%

For CPA to ensure reliable broadcast from $p_s$, a sufficient condition is that a $k$-level ordering from $p_s$ exists, with $k \ge 2f+1$. Conversely, the necessary condition demands a $k$-level ordering from $p_s$ with $k \ge f+1$ (see \cite{litsas2013graph}).
Those conditions can be verified with an algorithm whose time complexity is polynomial in the size of the network, specifically with a modified Breadth-First-Search.
In the case that a graph $G=(V, E)$ satisfies the necessary condition from $p_s$ but not the sufficient one, then further analysis must be carried out. In particular, in order to verify whether $G$ enables reliable broadcast from $p_s$, one should check whether a $k$-level ordering from $p_s$ exists (with $k=f+1$) in every sub-graph $G'$ obtained from $G$ by removing all nodes corresponding to possible Byzantine placement in the $f$-locally bounded assumption. 
The verification of the strict condition has been proven to be NP-Hard~\cite{ichimura2010new}.
\section{The Certified Propagation Algorithm on Dynamic Networks}
\label{sec:dcpa}
In this section, we analyze how CPA behaves on dynamic networks, \emph{i.e.} networks whose topology may evolve over time, 
and how it needs to be extended to work in such settings.

\begin{figure}
	\centering
	\begin{subfigure}{.6\textwidth}
		\centering
		\includegraphics[width=.95\linewidth]{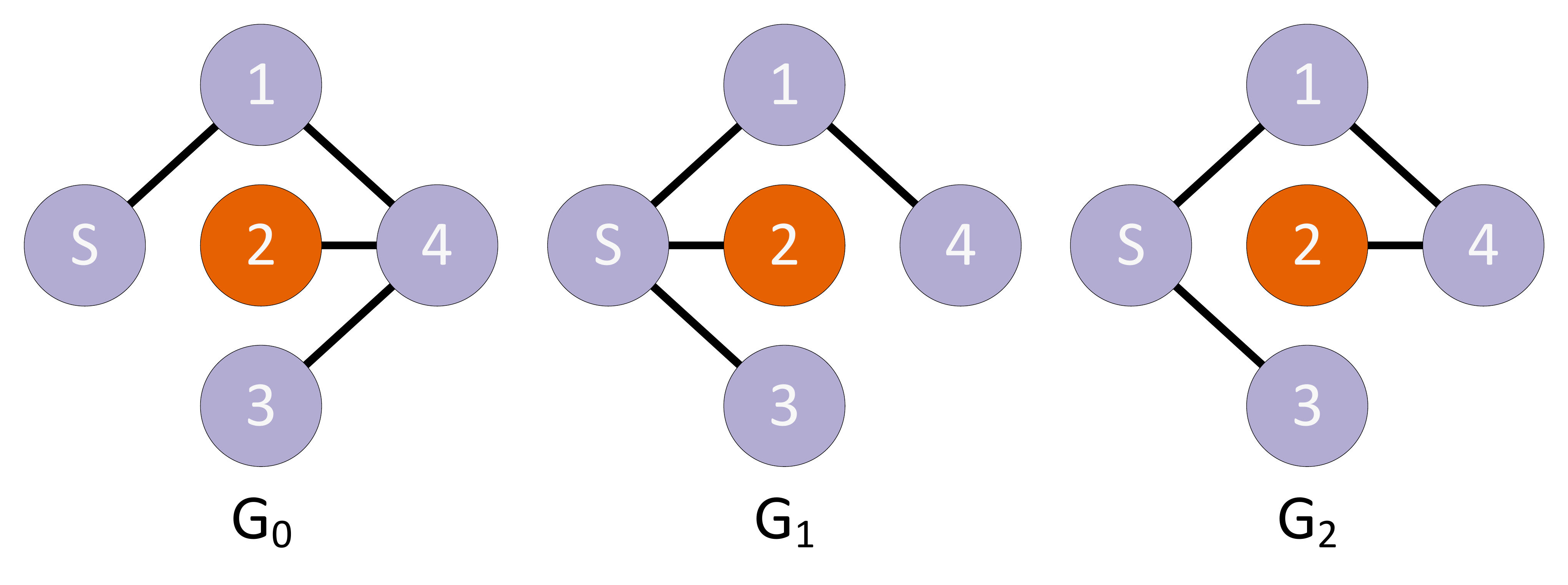}
		\caption{A Time Varying Graph $\mathcal{G}=( V,E,\rho,\zeta)$.}\label{sfig:tvg}
	\end{subfigure}%
	\begin{subfigure}{.4\textwidth}
		\centering
		\includegraphics[width=.47\linewidth]{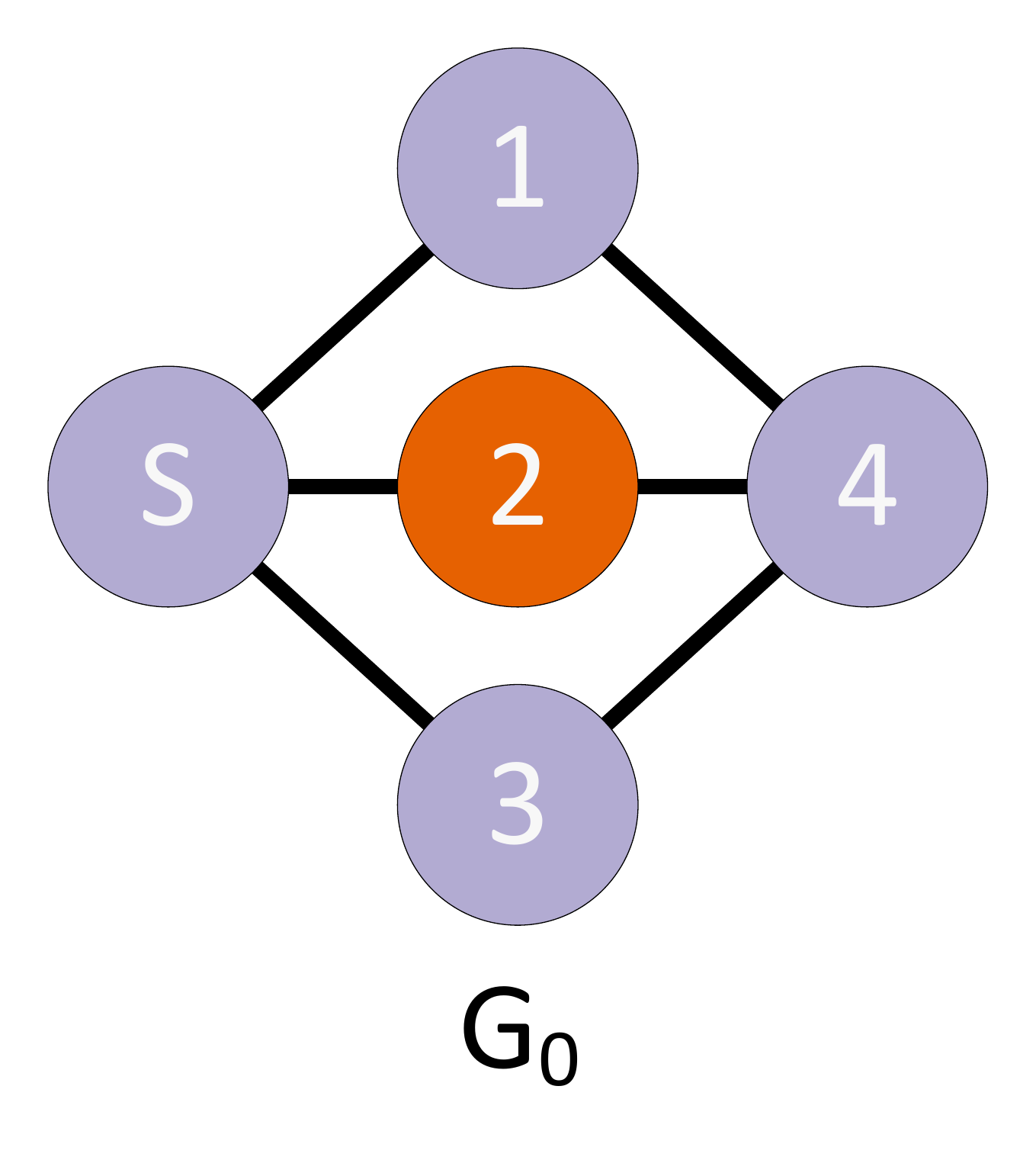}
		\caption{Underlying graph $G=(V, E)$.}\label{sfig:undGraph}
	\end{subfigure}
\caption{Example of a simple TVG and its underlying static graph.}\label{fig:TVG}
\end{figure}

Let us consider the TVG shown in Figure~\ref{fig:TVG} and suppose process $p_2$ is Byzantine. If we consider the static underlying graph $G=(V, E)$ shown in Figure~\ref{sfig:undGraph}, it is easy to verify that running CPA from the source node $p_s$ is possible to achieve reliable broadcast in a $1$-locally bounded adversary. However, if we consider snapshots of the TVG at different times\footnote{For the sake of simplicity, we consider the channel delay always equal to 1 in the example.} as shown in Figure~\ref{sfig:tvg}, one can verify that nodes $p_3$ and $p_4$ remain unable to deliver the message forever.  
In fact, $p_3$ is not a neighbor of the source $p_s$ when the message is broadcast by $p_s$ (\emph{i.e.}, at time $t_0$), and even if it had happened ($e_{s,3}$ at time $t_0$) the edge connecting $p_4$ with its correct neighbor $p_3$ appears only before the message would have been delivered and accepted by $p_3$, and thus it is not available for the retransmission. From this simple example its easy to see that the temporal dimension plays a fundamental role in the definition of topological constraints that a TVG must satisfy to enable reliable broadcast.

{\color{blue}
}


\subsection{CPA Safety in Dynamic Networks}
\label{subsec:cpasafe}
\noindent
In the following, we show that the authenticated and reliable channels are necessary to ensure the reliable broadcast through CPA.

	

\begin{lemma}\label{lem:safety}
	The CPA algorithm does not ensure safety of reliable broadcast when channels are not both authenticated and reliable (even on static graphs).
\end{lemma}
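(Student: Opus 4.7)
The plan is to prove the lemma by exhibiting, for each of the two channel assumptions, a small static scenario in which CPA delivers a message that the source never sent. Since the conclusion is an existential impossibility (``there exists an execution on some static graph that violates safety''), counterexamples suffice, and we can work on the simplest topology on which CPA would normally be safe, so the dropped assumption is isolated as the real cause.

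For the first case, I would drop \emph{authentication} and consider the star $G=(\{p_s,p_j,p_b\},\{e_{s,j},e_{s,b},e_{b,j}\})$ with $p_s$ correct and source, $p_j$ correct, and $p_b$ Byzantine; this is trivially $1$-locally bounded. In a run where $p_s$ never initiates any broadcast, $p_b$ fabricates a message $m$ and transmits it to $p_j$ over $e_{b,j}$ with the sender field forged to the identifier of $p_s$. Without channel authentication, $p_j$ cannot tell this apart from a message actually relayed on the edge $e_{s,j}$, so it evaluates the acceptance policies exactly as if it had received $m_s$ directly from $p_s$ and delivers $m_s$ by \textbf{AC2}. Since $p_s$ is correct and sent nothing, safety is violated.

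For the second case, I would keep authentication but drop \emph{reliability}, and use the trivial two-node graph $G=(\{p_s,p_j\},\{e_{s,j}\})$ with both processes correct and no Byzantine node at all. The failure of the ``only if'' direction of the reliable-channel definition permits the channel either to deliver a message $m$ to $p_j$ that $p_s$ never sent, or to deliver a message whose content differs from what was sent. In either sub-case, $p_j$ receives a message carrying $p_s$ as its sender, applies \textbf{AC2}, and delivers a message $m$ that was not sent by the source, again violating safety. Note that this case does not even require a Byzantine process, which strengthens the statement: the failure is attributable solely to the channel.

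The only subtle point, and the one I would be most careful about, is the precise reading of ``not reliable.'' The paper defines reliability as an ``if and only if''; negating the liveness direction (sent messages not delivered) would jeopardise only liveness, so the counterexample must explicitly invoke the negation of the safety direction (spurious or altered delivery), which is what the construction above does. Combining the two counterexamples then yields the claim that both properties are needed, which is exactly what the lemma asserts.
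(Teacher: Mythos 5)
Your proposal is correct and follows essentially the same route as the paper: the paper's proof also argues each assumption separately, noting that without authentication a Byzantine node can forge identities (you instantiate this concretely with a spoofed source triggering \textbf{AC2}, whereas the paper mentions impersonating several neighbors to defeat the $f$-locally bounded assumption) and that an unreliable channel can itself act like a Byzantine process by delivering an altered or spurious message. Your version is simply a more explicit, counterexample-based rendering of the same argument, including the careful reading of which direction of the reliability definition must be negated.
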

\begin{proof}
	An authenticated channel guarantees that the identity of the sender of a message cannot be forged. Without this assumption a Byzantine process can impersonate an arbitrary number of processes and invalidate the \emph{f-locally bounded} assumption. 

	A reliable channel guarantees that a message is received as it was sent by its sender. Without this assumption, an unreliable channel can potentially simulate a Byzantine process (namely the channel can deliver a message different from the one that was sent by the sender).
	\hfill
	$\Box$
\end{proof}

The same channel assumptions are sufficient for ensuring safety also on dynamic networks.
\begin{theorem}\label{th:correctness}
	Let $\mathcal{G}=( V,E,\rho,\zeta)$ be the TVG of a network with $f$-locally bounded Byzantine adversary. If every correct process $p_i$ runs CPA on top of reliable authenticated channels, then if a message $m_s$ is delivered by $p_i$, $m_s$ was previously sent by the correct source $p_s$.
\end{theorem}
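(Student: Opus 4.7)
The plan is to argue by contradiction and time-minimality, in essentially the same spirit as the classical CPA safety proof for static networks, observing that the dynamic structure of $\mathcal{G}$ plays no role: only the channel assumptions and the $f$-locally bounded assumption on the underlying graph $G$ are used. Assume, for contradiction, that there exists a correct process that delivers some $m_s$ although $p_s$ never sent $m_s$. Among all correct processes that do so, let $p_i$ be one whose delivery time $t^\star$ is minimum (such a minimum exists because deliveries occur at discrete time units and the set of offending processes is non-empty by assumption).

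Next, I would rule out each of the three acceptance policies in turn. Since $p_s$ is correct and never sent $m_s$, $p_i \neq p_s$, which rules out \emph{AC1}. If $p_i$ delivered via \emph{AC2}, then it received $m_s$ from a neighbor whose identifier, as carried by the channel, equals $p_s$; by the authenticated channel assumption the sender identity cannot be forged, so $p_s$ itself sent $m_s$ to $p_i$, contradicting the hypothesis. The only remaining possibility is \emph{AC3}.

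For \emph{AC3}, $p_i$ received $f+1$ copies of $m_s$ from distinct (authenticated) neighbors. Every such sender must be a neighbor of $p_i$ in some snapshot $G_k$, hence a neighbor of $p_i$ in the underlying graph $G$. By the $f$-locally bounded hypothesis applied to $G$, at most $f$ of them are Byzantine, so at least one of them, say $p_j$, is correct. By the reliable channel assumption, $p_j$ actually transmitted $m_s$ to $p_i$, and by CPA's discipline a correct process only forwards $m_s$ after delivering it. Since channel latencies $\zeta(\cdot,\cdot)$ are strictly positive and the send action precedes the corresponding receive action at $p_i$, the delivery of $m_s$ by $p_j$ occurred strictly before $t^\star$. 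This contradicts the minimality of $t^\star$ and concludes the proof.

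The only conceptual subtlety — and arguably the main point to justify carefully — is that the locality bound is stated on the underlying graph $G$ and not snapshot-by-snapshot, yet it still suffices to bound the Byzantine senders of $p_i$ across all of time; this is immediate from the definition of $G$ as the union of all snapshots, which makes the static-case counting argument go through unchanged. Everything else is a routine adaptation of the classical CPA safety proof, with time-minimality replacing the level-minimality that one would use in a layered static argument.
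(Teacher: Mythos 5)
Your proof is correct and follows essentially the same route as the paper: the paper's own proof is a one-line reduction to static CPA safety, justified by the observation that the $f$-locally bounded bound carries over to the underlying graph $G$, and your argument simply unfolds that static proof (minimal-counterexample over delivery times, ruling out AC1/AC2 via authentication and AC3 via the neighbor count in $G$) explicitly in the dynamic setting. One cosmetic remark: strict positivity of the latencies is not needed (nor guaranteed, since $\zeta$ maps into $\mathbb{N}$); the strict precedence of $p_j$'s delivery over $p_i$'s already follows from the send/receive/compute ordering, which defers forwarding to the next time unit.
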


\begin{proof}
The proof trivially follows from CPA correctness in static networks with $f$-locally bounded adversary, considering that in the underlying graph $G=(V, E)$, we still have a $f$-locally bounded adversary. \hfill$\Box$
\end{proof}

\subsection{CPA Liveness in Dynamic Networks}
\label{subsec:cpalive}
The CPA liveness in static networks is based on the availability of a certain topology that supports the message propagation. Indeed every edge is always up so, once the communication network satisfies the topological constraints imposed by the protocol, the assumption that channels do not lose messages is sufficient to guarantee their propagation.
In dynamic networks, this is no longer true.
Let us recall that each edge $e$ in a TVG is up according to its presence function $\rho(e, t)$. At the same time, the message delivery latency are determined by the edge latency function $\zeta(e, t)$. As a consequence, in order to ensure that a message $m$ sent at time $t$ from $p_i$ to $p_j$ is delivered, we need that $(p_i,p_j)$ remains up until time $t+\zeta(e, t)$.
Contrarily, there could exist a communication channel where every message sent has no guarantee to be delivered as the edge disappears while the message is still traveling.
Thus, in addition to topological constraints, moving to dynamic networks we need to set up other constraints on when edges appear and for how long they remain up.
Considering that processes have no information about the network evolution, they do not know if and when a given transmitted message will reach its receiver. 
Hence, without assuming extra knowledge, a correct process must re-send messages infinitely often. 

As a consequence, CPA must be extended to the dynamic context incorporating the following additional steps:
\begin{itemize}
	\item [-] 
if process $p_i$ delivers a message $m$, it forwards $m$ to all of its neighbors infinitely often, at every time unit.
\end{itemize}
As a consequence, each time that the neighbors of $p_i$ changes, $p_i$ attempts to propagate the message.
Let us notice that such an infinite retransmission can be avoided/stopped only if a process get the acknowledgments about the delivery of the communication channels. This issue has been analyzed by considering further assumptions on the dynamic network
\cite{DBLP:conf/europar/Gomez-CalzadoCL15,DBLP:conf/aina/RaynalSCW14}.
To ease of explanation, we will refer to this extended version of CPA as Dynamic CPA (DCPA).

We now characterize the conditions enabling a communication channel to deliver messages in order to argue about liveness.
For this purpose, we define a boolean predicate whose value is ${\sf true}$ if and only if the TVG allows the reliable delivery of a message $m$ sent from $p_i$ to $p_j$ at time $t$.

\begin{definition}
Let $\mathcal{G}=( V,E,\rho,\zeta)$ be a TVG. We define the predicate \emph{Reliable Channel Delivery at time $t'$}, ${\sf RCD}(p_i, p_j, t')$ as follows:\\
$${\sf RCD}(p_i, p_j, t') =
\begin{cases}
 {\sf true} & if ~ \rho(<p_i, p_j>, \tau)=1, ~ \forall \tau \in [t', t' + \zeta(e_{i, j},t')].\\
 {\sf false} & otherwise.
\end{cases}
$$
\end{definition}
{
The communication channels do not usually have memory, thus we consider any message sent while the ${\sf RCD}()$ predicate is {\sf false} as dropped. 
}

Now that we are able to express constraints on each edge through the ${\sf RCD}()$ predicate, we need to define those ${\sf RCD}()$ that enable liveness of reliable broadcast. Let us define the \emph{$k$-acceptance function}, that encapsulates temporal aspects for the three acceptance conditions of CPA.

\begin{definition}
	Let $p_s \in \Pi$ be a process that starts a reliable broadcast at time $t_{br}$. The \emph{$k$-acceptance function}
	$\mathcal{A}_k(p,t)$ over the time $t \in \mathbb{N}$ is defined as follows:
	\vspace{0.3cm}
	
	\scalebox{0.85}{
		$\mathcal{A}_k(p_j,t)$ = 
		$\begin{cases} 
		1 & $ if $ p_j=p_s $ with $ t \ge t_{br} 
		~~~~~~~~~~~~~~~~~~~~~~~~~~~~~~~~~~~~~~~~~~
		~~~~~~
		 \text{(AK1)} \\
		1 & $ if $\exists~ t' \geq t_{br} ~:~ {\sf RCD}(p_s, p_j, t')= {\sf true} ~\text{with}~ t \geq t' + \zeta(e_{s, j},t') 
		~~~
		\text{(AK2)}\\ 
		1 & $ if $ \exists ~p_1,\dots,p_{k} ~:~ 
		~\forall i \in [1,k] , ~ \mathcal{A}_k (p_i,t_i) = 1 ~ \text{and} \\
		&	~~~~ \exists~ t'_i \geq t_{i} 
		~:~ {\sf RCD}(p_j, p_i, t'_i)= {\sf true} ~\text{with}~ 
		t \geq t'_{i}  + \zeta(e_{i, j},t'_i) ~~~\text{ (AK3)}\\
		0 & ~ otherwise
		\end{cases}$
	}
	
\end{definition}
\begin{definition}
Let $\mathcal{G}=( V,E,\rho,\zeta)$ be a TVG, and let $p_s$ be a node called \emph{source}. A \emph{temporal minimum $k$-level ordering} of $\mathcal{G}$ (TMKLO) from $p_s$ is a partition of the nodes in levels $L_{t_i}$ defined as follows:

$$p \in L_{t_i} ~ \emph{iff} ~ t_i={\sf min}~ t \in \mathbb{N} ~ \emph{such ~that} ~ \mathcal{A}_k(p, t_i)=1$$
\end{definition}
\noindent
Let us denote as $P_k$ the partition identifying the temporal minimum $k$-level ordering.
%

\begin{figure}[t]
	\centering
	\includegraphics[width=\linewidth]{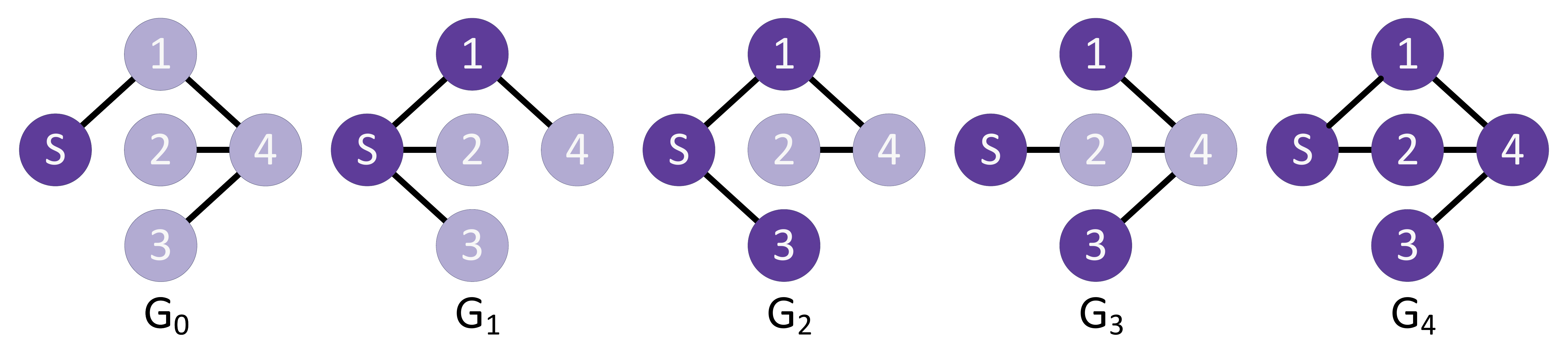}
	\caption{TVG example.}
	\label{fig:tvg5}
\end{figure}

As an example, let us consider the TVG presented in Figure~\ref{fig:tvg5}: it evolves in five discrete time instants (\emph{i.e.}, $t_0, t_1, \dots, t_4$), its latency function $\zeta(e, t)$ is equal to $1$ for every edge $e$ at any time $t$.
Now, let us consider process $p_s$ as a source node that broadcasts $m$ at time $t_{br} = 0$, and let us assume that $k=2$. 
Such a TVG admits a temporal minimum $2$-level ordering $P_2=\{L_{t_0} = \{p_s\}, L_{t_1} = \{p_1\}, L_{t_2} = \{p_3\}, L_{t_4} = \{p_2,p_4\}\}$. Indeed:
\begin{itemize}
	\item The $2$-acceptance function $\mathcal{A}_2(p_s,t)$ is equal to $1$ for $t \geq t_{br} = t_0$ according to\emph{ AK1}. 
	\item The acceptance function evaluated on process $p_1$ is equal to $1$ for $t\geq 1$ according to {\emph{AK2}} (\emph{i.e.}, $t' = 0$ and $RCD(p_s,p_1,0)=true$ due to the presence function $\rho(<p_s, p_1>, \tau)=1, ~ \forall \tau \in [0, 1]$). 
	\item On processes $p_3$ and $p_2$, the acceptance function evaluates to $1$ respectively for $t \geq 2$ and for $t \geq 4$, for the same reasons as $p_1$. 
	\item The acceptance function on $p_4$ evaluates to $1$ for $t \geq 4$ according to {\emph{AK3}} (\emph{i.e.}, ${\sf RCD}(p_i, p_4, t'_i)={\sf true}$ for $p_i = p_1$, $t'_i = 1$, and for $p_i = p_3$, $t'_i = 3$). 
\end{itemize} 
\vspace{10px}
We {now} present a sufficient condition (Theorem~\ref{th:liveness_sufficiency}) and a necessary condition (Theorem~\ref{th:liveness_necessity}) for the liveness of reliable broadcast based on the TMKLO.

\begin{theorem}[DCPA liveness sufficient condition]\label{th:liveness_sufficiency}
Let $\mathcal{G}=( V,E,\rho,\zeta)$ be a TVG, let $p_s$ be the \emph{source} which broadcasts $m$ at time $t_{br}$, and let us assume $f$-locally bounded Byzantine failures. If there exists a partition $P_k=\{L_{t_{br}}, L_{t_1} \dots L_{t_x}\}$ of the nodes in $V$ representing a TMKLO of $\mathcal{G}$ associated to $m$ with $k>2f$, then the message $m$ spread using DCPA is eventually delivered by every correct process in $\mathcal{G}$.
\end{theorem}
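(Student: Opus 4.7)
The plan is to prove the theorem by induction on the level index of the TMKLO partition $P_k$, showing that every correct process $p \in L_{t_i}$ eventually delivers $m$. This mirrors the classical correctness argument for CPA in static networks, but adapted so that each inductive step invokes the appropriate acceptance condition (AC1/AC2/AC3) triggered by the corresponding temporal condition (AK1/AK2/AK3), and exploits the infinite retransmission rule added to obtain DCPA.

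The base case concerns $L_{t_{br}} = \{p_s\}$: by AK1 and AC1, the source delivers $m$ at time $t_{br}$ and begins forwarding it. For the inductive step, fix $i \geq 1$ and assume that every correct process in $L_{t_{br}} \cup L_{t_1} \cup \dots \cup L_{t_{i-1}}$ has delivered $m$, and moreover, by the DCPA retransmission rule, each such correct process keeps sending $m$ to all of its current neighbors at every time unit after its delivery. Let $p \in L_{t_i}$ be a correct process. I split on which clause of $\mathcal{A}_k(p,t_i)=1$ applies. If AK2 holds, then there exists $t' \geq t_{br}$ such that ${\sf RCD}(p_s, p, t') = {\sf true}$; by reliability of the channel during $[t', t' + \zeta(e_{s,p}, t')]$ and authentication of the sender field, $p$ receives $m$ directly from $p_s$ and delivers by AC2.

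The main case is AK3, where the argument uses the locally bounded assumption. There exist $k \geq 2f+1$ processes $p_1, \dots, p_k$, each a neighbor of $p$ in the underlying graph (since ${\sf RCD}(\cdot,\cdot,\cdot)$ can only evaluate to ${\sf true}$ on an edge of $E$), with $\mathcal{A}_k(p_\ell, t_\ell)=1$ for some $t_\ell$, and corresponding times $t'_\ell \geq t_\ell$ for which the channel to $p$ is stable. Because every $t'_\ell + \zeta \leq t_i$, each $t_\ell < t_i$, so the $p_\ell$ lie in strictly earlier levels. By the $f$-locally bounded assumption, at most $f$ of the $k$ neighbors $p_1,\dots,p_k$ are Byzantine, hence at least $k-f \geq f+1$ of them are correct. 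By the induction hypothesis each such correct $p_\ell$ has already delivered $m$ by time $t_\ell$, and by the DCPA retransmission rule keeps forwarding it; the ${\sf RCD}$ predicate then guarantees that $m$ reliably reaches $p$ over each of these $f+1$ distinct edges by time $t_i$. Authentication ensures the $f+1$ senders are genuinely distinct (a Byzantine node cannot impersonate several neighbors), so AC3 triggers at $p$ and $p$ delivers $m$.

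The step I expect to be most delicate is the AK3 case, precisely because one has to combine three ingredients in the right order: the purely combinatorial bound from $f$-local boundedness (at most $f$ of the $k$ guaranteeing neighbors can be Byzantine), the inductive hypothesis giving delivery at the correct predecessors, and the channel-level guarantee provided by ${\sf RCD}$ together with the infinite retransmission rule of DCPA. The other subtle point is verifying that the strict inequality $t_\ell < t_i$ holds so that the induction is well-founded, which follows from $t_i \geq t'_\ell + \zeta(e_{\ell,p}, t'_\ell) > t'_\ell \geq t_\ell$, assuming every latency value is at least one time unit, consistently with the synchronous model described in Section~\ref{sec:sys_mod}.
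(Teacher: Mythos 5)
Your proof is correct and follows essentially the same route as the paper's: map AK1/AK2/AK3 to AC1/AC2/AC3, use the $f$-locally bounded assumption to extract at least $f+1$ correct forwarding neighbors among the $k>2f$ witnesses, and rely on DCPA's infinite retransmission plus ${\sf RCD}$ to get the copies across. Your explicit induction on level times with the well-foundedness check ($t_\ell < t_i$) is just a cleaner packaging of the paper's Case~1/Case~2 inductive argument, so no substantive difference.
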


\begin{proof}
We need to prove that if there exist a TMKLO with $k > 2f$ associated to message $m$, then any correct process eventually satisfies one of the CPA acceptance policies. A \emph{TMKLO} with $k > 2f$ implies that there exist a time $t$ such that the $2f+1$-acceptance function $\mathcal{A}_{k}(p,t)$ is equal to $1$ for every node of the network.

The process $p_s$ belongs to any TMKLO due to AK1: as the source of the broadcast, $p_s$ delivers the message according to AC1. Remind that the correct processes running DCPA spread the delivered messages over their neighborhood infinitely often. Then, the other nodes belong to the TMKLO due to the occurrence of AK2 or AK3.

If $AK2$ is satisfied by a node $p_j$ from time $t_j$, then $m$: \emph{(i)} can be delivered by the channel interconnecting $p_s$ with $p_j$ by definition of ${\sf RCD}()$, and \emph{(ii)} it is transmitted by $p_s$, because $t_j$ is greater than $t_{br}$. It follows that $p_j$ delivers $m$ according to $AC2$: indeed, $p_j$ has received $m$ directly from the source.

If $AK3$ is satisfied on a node $p_j$, it is possible to identify two scenarios: 
\begin{itemize}
\item {\bf Case 1}: ${\sf RCD}()$ is satisfied between $p_j$ and $2f+1$ nodes $p_i$ where $AK2$ is already satisfied.
We have shown that the processes satisfying $AK2$ accept $m$, and so they retransmit $m$. Assuming the $f$-locally bounded failure model, at most $f$ nodes among the neighbors of $p_i$ can be Byzantine and may not propagate $m$. Thus, $p_j$ receives at least $f+1$ copies of $m$ from distinct neighbors. According to $AC3$ of DCPA $p_j$ delivers $m$. 
\item {\bf Case 2}: ${\sf RCD}()$ is satisfied between $p_j$ and $2f+1$ nodes $p_i$ where $AK2$ or $AK3$ is already satisfied.
Inductively, as the nodes considered in {\bf Case 1} deliver $m$, it follows that the nodes $p_j$ satisfying $AK3$ due to at least $2f+1$ nodes $p_i$ where $AK2$ or $AK3$ already holds also deliver $m$.
\end{itemize}
\hfill$\Box$
\end{proof}
\begin{theorem}[DCPA liveness necessary condition]\label{th:liveness_necessity}
	Let $\mathcal{G}=( V,E,\rho,\zeta)$ be a TVG, let $p_s$ be the \emph{source} that starts to broadcast $m$ at time $t_{br}$, and let us assume $f$-locally bounded Byzantine failures.
	The message $m$ can be delivered by every correct process in $\mathcal{G}$ only if a partition $P_k=\{L_{t_{br}}, L_{t_1} \dots L_{t_x}\}$ of nodes in $V$ representing a TMKLO of $\mathcal{G}$ associated to $m$ with $k>f$ exists.
\end{theorem}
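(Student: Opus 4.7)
I would prove the contrapositive: if no TMKLO with $k \geq f+1$ exists, then liveness of DCPA can fail. Since reliable broadcast must hold under \emph{every} Byzantine placement consistent with $f$-locally bounded, it suffices to exhibit one such placement together with an execution where some correct process never delivers $m$. The simplest candidate is the execution in which there are no Byzantine processes at all (the empty set trivially obeys the $f$-locally bounded constraint), and in this setting DCPA behaves deterministically once the presence and latency functions are fixed.

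First I would unpack the hypothesis: if no TMKLO with $k=f+1$ exists, then there is at least one node $p_j\in V$ such that $\mathcal{A}_{f+1}(p_j,t)=0$ for every $t\in\mathbb{N}$. By unfolding the three cases of the acceptance function, this means $p_j\neq p_s$, there is no $t'\geq t_{br}$ with ${\sf RCD}(p_s,p_j,t')={\sf true}$, and at no time $t$ can one find $f+1$ distinct neighbors $p_1,\dots,p_{f+1}$ of $p_j$ each carrying an earlier acceptance time $t_i$ together with a witness $t'_i\geq t_i$ making ${\sf RCD}(p_j,p_i,t'_i)={\sf true}$ with $t\geq t'_i+\zeta(e_{i,j},t'_i)$.

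The heart of the proof would be an induction on time establishing the invariant: \emph{in the all-correct execution of DCPA, if a process $p$ delivers $m$ at time $t$, then $\mathcal{A}_{f+1}(p,t)=1$}. The base case is $t=t_{br}$, where only $p_s$ has delivered via AC1, witnessed by AK1. For the inductive step, a fresh delivery at time $t$ happens via AC2 or AC3. If via AC2, then $p$ received $m$ directly from $p_s$, so the source's channel to $p$ was reliable during transmission, matching AK2. If via AC3, then $p$ received $m$ from $f+1$ distinct neighbors; since nobody is Byzantine, all $f+1$ are correct and must have delivered $m$ at prior times $t_i<t$, and the forwarded copy from $p_i$ reached $p$ thanks to some $t'_i\geq t_i$ with ${\sf RCD}(p_i,p,t'_i)={\sf true}$ and $t=t'_i+\zeta(e_{i,j},t'_i)$. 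Applying the inductive hypothesis $\mathcal{A}_{f+1}(p_i,t_i)=1$ to each sender witnesses AK3 at $(p,t)$. Instantiating this invariant at $p_j$ shows $p_j$ never delivers $m$, contradicting liveness and proving the theorem.

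The main obstacle, I expect, is the bookkeeping that connects the operational behavior of DCPA to the purely topological predicate $\mathcal{A}_{f+1}$, specifically the interplay between a neighbor's delivery time $t_i$ and the later send time $t'_i$ at which the forwarding channel is actually up long enough for the message to arrive; the infinite retransmission introduced by DCPA is precisely what makes it legitimate to choose any $t'_i\geq t_i$ matching the RCD witness, so this step is sound but needs to be stated with care. A secondary subtlety worth mentioning is that the argument uses $B=\emptyset$ on purpose: even without any genuine Byzantine presence, DCPA still demands $f+1$ copies for AC3, so the threshold itself is what forces the topological lower bound $k\geq f+1$.
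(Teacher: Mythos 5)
Your proposal is correct, and its core is the same correspondence the paper exploits: every delivery event under DCPA (AC1/AC2/AC3) yields a witness for the corresponding clause of the acceptance function (AK1/AK2/AK3), so every delivering node enters the partition $P_{f+1}$, and a node missing from every level can never deliver. The route differs in two worthwhile ways. First, you argue the contrapositive by fixing one admissible adversary, namely the empty Byzantine placement, and run an explicit induction on (delivery) time in that all-correct execution; the paper instead argues by contradiction over a general execution in which Byzantine processes may be present. Your choice buys rigor: in the paper's setting a node accepting by AC3 may have received some of its $f+1$ copies from Byzantine neighbors that never ``accepted'' $m$ in the topological sense, so the claim that all $f+1$ senders already sit in $P_{f+1}$ needs care (the paper handles only senders that delivered by AC2 or AC3), and Byzantine nodes must somehow also be covered for the partition of all of $V$ to exist; with $B=\emptyset$ both issues vanish, since every sender is correct and has delivered strictly earlier, and every node of $V$ is correct and must deliver by the liveness assumption. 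Second, your closing observation that the bound comes from the threshold $f+1$ in AC3 itself, not from any actual Byzantine behavior, makes explicit why exhibiting the failure with no faulty nodes at all is enough, which is the same reason the paper's condition is necessary but is left implicit there. The cost of your version is that it establishes necessity through one specific placement, which is exactly what the ``only if'' statement requires, while the paper's phrasing nominally tracks arbitrary executions.
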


\begin{proof}
	Let us assume for the purpose of contradiction that: \emph{(i)} every correct process in $\mathcal{G}$ delivers $m$, \emph{(ii)} the Byzantine failures are $f$-locally bounded, and \emph{(iii)} there does not exist a TMKLO associated to $m$ with $k > f$. The latter implies that the TMKLO with $k = f+1$ does not include all the nodes, \emph{i.e.} $\exists p \in \Pi ~|~ \forall t \in \mathbb{N}, \mathcal{A}_{f+1}(p,t) = 0$.
	
	The process $p_s$ is always included in a \emph{TMKLO} of any $k$. Thus, $p_s$ is included in $P_{f+1}$.
	The nodes that deliver $m$ according to $AC2$ have received $m$ from $p_s$. Thus, the ${\sf RCD}()$ predicate evaluated between $p_s$ and $p_i$ was true at least once after the delivery of $m$ by $p_s$. It follows that the condition defined in $AK2$ is eventually satisfied, and that those nodes are included in $P_{f+1}$.
	
	The remaining nodes that deliver according to $AC3$ have received the message from $f+1$ distinct neighbors.
	Let us initially assume that such neighbors have delivered the message by $AC2$. Again, the RCD predicate evaluated between the receiving node $p_j$ and the distinct $f+1$ neighbors $p_i$ has been true at least once after the respective deliveries of $m$. We already proved that such neighbors of $p_i$ are included in $P_{f+1}$, therefore the condition defined in $AK2$ is satisfied by those $p_j$ and they are included in $P_{f+1}$.
	
	It naturally follows that the remaining nodes (the ones that have received the message from neighbors satisfying $AC2$ or $AC3$) are included in $P_{f+1}$. This is in contradiction with the assumptions we made, because eventually every process satisfies one of the conditions AK1, AK2 or AK3, and the claim follows.	\hfill$\Box$
\end{proof}

%
\section{On the Detection of DCPA Liveness}
\label{sec:detecting}

In Section \ref{sec:dcpa}, we proved that DCPA always ensure the reliable broadcast safety, and we provided the necessary and sufficient conditions about the dynamic network to enforce the reliable broadcast liveness. In this section, we are investigating the ability of individual processes to detect whether the reliable broadcast liveness is actually achieved in the current network. In more detail, we seek answers to the following questions:

\begin{itemize}
\item {\bf(Conscious Termination):} Given a message $m_s$ sent by a source $p_s$ on TVG $\mathcal{G}$, is $p_s$ able to detect if $m_s$ will eventually be delivered by every correct process? 
\item {\bf (Bounded Broadcast Latency):} Given a message $m_s$ sent by a source $p_s$ on TVG $\mathcal{G}$, is $p_s$ able to compute upper and lower bounds for reliable broadcast completion?
\end{itemize}
Obviously, if $p_s$ has no knowledge about $\mathcal{G}$, nothing about termination can be detected. As a consequence, some knowledge about $\mathcal{G}$ is required to enable Conscious Termination and Bounded Broadcast Latency.
We now formalize the notion of \emph{Broadcast Latency}, and introduce oracles that abstract the knowledge a process may have about $\mathcal{G}$.

\begin{definition} [Broadcast Latency (BL)]	
Let $\mathcal{G}=( V,E,\rho,\zeta)$ be a TVG and let $p_s$ be a node called \emph{source} that broadcasts a message $m$ at time $t_{br}$. We define as \emph{Broadcast Latency} $BL$ the period between $t_{br}$ and the time of the last delivery of $m$ by a correct process. 
\end{definition}
\noindent
We define the following knowledge oracles (from more powerful to least powerful):

\begin{itemize}
	\item \textbf{\emph{Full knowledge Oracle (FKO)}}: FKO provides full knowledge about the TVG, \emph{i.e.}, it provides $\mathcal{G}=( V,E,\rho,\zeta)$;
	
	\item \textbf{\emph{Partial knowledge Oracle (PKO)}}: given a TVG $\mathcal{G}=( V,E,\rho,\zeta)$, PKO provides the underlying static graph $G=(V, E)$ of $\mathcal{G}$;
	
	\item \textbf{\emph{Size knowledge Oracle (SKO)}}: given a TVG $\mathcal{G}=( V,E,\rho,\zeta)$, SKO provides the size of $\mathcal{G}$, that is $|V|$.
\end{itemize}

\subsection{Detecting DCPA Liveness on Generic TVGs}
In Section \ref{sec:dcpa} we showed that the conditions guaranteeing the liveness property of reliable broadcast are strictly bounded to the network evolution. It follows that the knowledge provided by an FKO, in particular about the network evolution starting from the broadcast time $t_{br}$, is necessary to argue on liveness, unless further assumptions are taken into account.
	In the following, we clarify how a process can employs an FKO to detect Conscious Termination and Bounded Broadcast Latency.
\begin{lemma}
	\label{lem:FKO-TMKLO}
Let $\mathcal{G}=( V,E,\rho,\zeta)$ be a TVG, let $p_s$ be a node called \emph{source} that broadcasts a message $m$ at time $t_{br}$ and let us assume $f$-locally bounded Byzantine failures.
If $p_s$ has access to an FKO then it is able to verify if there exists a TMKLO for the current broadcast on $\mathcal{G}$.
\end{lemma}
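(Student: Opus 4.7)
The plan is to argue constructively: given the full specification $\mathcal{G}=(V,E,\rho,\zeta)$ returned by FKO, exhibit an algorithm that $p_s$ can run to compute the partition $P_k$ (for $k=2f+1$, or for whichever value one needs to test against Theorems~\ref{th:liveness_sufficiency} and~\ref{th:liveness_necessity}) and then simply check whether $P_k=V$. Because FKO delivers the presence function $\rho$ and the latency function $\zeta$ symbolically, the predicate ${\sf RCD}(p_i,p_j,t')$ is decidable for every triple $(p_i,p_j,t')$, which is the only primitive the $k$-acceptance function depends on.

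The construction proceeds level by level, mirroring the recursive definition of $\mathcal{A}_k$. I would initialize $P_k = \{L_{t_{br}} = \{p_s\}\}$ by AK1, and maintain the invariant that after iteration $\ell$ the set $S_\ell = \bigcup_{i\le\ell} L_{t_i}$ contains exactly those nodes $p$ for which some witness for $\mathcal{A}_k(p,\cdot)=1$ can be built using only the already-placed nodes. At iteration $\ell+1$, for each $p\in V\setminus S_\ell$ I would compute the earliest time $\tau(p)$ at which $\mathcal{A}_k(p,\tau(p))$ becomes $1$ given $S_\ell$: this requires, for AK2, locating the smallest $t'\ge t_{br}$ with ${\sf RCD}(p_s,p,t')={\sf true}$, and, for AK3, searching among the $\binom{|S_\ell|}{k}$ subsets for the earliest jointly feasible activation time. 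I would then pick the node(s) realizing $\min_p \tau(p)$ and append them as the new level $L_{t_{\ell+1}}$.

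The key finiteness observation is that each successful iteration strictly enlarges $S_\ell$, so the loop performs at most $|V|$ productive iterations; the procedure halts as soon as an iteration produces no new acceptance, and at that point $P_k = S_\ell$ is the TMKLO. To declare ``no new acceptance will ever occur'' in decidable time, I would invoke the fact that every AK2 or AK3 witness must rely on at least one node already in $S_\ell$ at the time of the witnessing ${\sf RCD}$; consequently, it suffices to scan $\rho$ and $\zeta$ over the horizon starting at the latest timestamp already in $P_k$, and if no candidate is generated after a full sweep determined by $S_\ell$, none will ever be. Comparing the resulting $P_k$ to $V$ yields the verification.

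The main obstacle I anticipate is precisely this termination-of-search argument: FKO is defined over the whole of $\mathbb{N}$, so one must be careful to explain why a finite prefix of $\rho$ and $\zeta$ suffices to refute the existence of further acceptances. My intended resolution is the monotonicity of $\mathcal{A}_k$ together with the fact that $|V|$ is finite, which caps the number of ``new level'' events at $|V|-1$; the only subtlety is formalising that, once the algorithm has exhausted the earliest acceptance times of all remaining nodes relative to $S_\ell$, no later evolution of the TVG can retroactively create an earlier witness for a node already frozen outside $P_k$.
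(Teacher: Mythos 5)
Your construction is essentially the paper's: a greedy, level-by-level computation of the TMKLO starting from $p_s$ at $L_{t_{br}}$, adding a node as soon as an AK2 or AK3 witness can be built from already-placed nodes via ${\sf RCD}$ checks, and then comparing the result to $V$. The paper phrases the same procedure as a scan of the snapshots $G_{t_{br}}, G_{t_{br}+1},\dots$ in temporal order rather than as a per-node earliest-acceptance computation, but the two are interchangeable (and your $\binom{|S_\ell|}{k}$ subset search for AK3 is unnecessary: it suffices to collect, for each unplaced node, the earliest delivery times from each already-placed neighbour and take the $k$-th smallest).

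The genuine gap is exactly the point you flag and then resolve incorrectly: deciding, in finite time, that no further acceptance will ever occur. Monotonicity of $\mathcal{A}_k$ and the bound of at most $|V|-1$ new-level events cap how many acceptances can happen, not when they can happen. In a general (class-1) TVG whose presence function is defined over all of $\mathbb{N}$, an edge joining a placed node to a frozen node may first satisfy ${\sf RCD}$ at an arbitrarily late time, so no finite sweep ``determined by $S_\ell$'' can refute future acceptances; your claim that ``if no candidate is generated after a full sweep determined by $S_\ell$, none will ever be'' is simply false in that setting. What closes the argument in the paper is the model itself: the evolution of $\mathcal{G}$ is given as a finite sequence of snapshots $G_0,\dots,G_T$, so FKO hands $p_s$ a finite description with horizon $T$; the algorithm scans all snapshots up to $T$ and only then declares that no TMKLO exists, yielding the stated $O(|V| + T|E|)$ complexity. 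Your proof needs to invoke this finite time horizon (equivalently, the finiteness of FKO's description of $\rho$ and $\zeta$) as the termination criterion, in place of the monotonicity argument.
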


\begin{proof}
	In order to prove the claim it is enough to show an algorithm that verifies if a TMKLO exists, given the full knowledge of  the TVG provided by FKO.\\
	Such algorithm works as follow: initially, the source $p_s$ is placed in level $L_{t_{br}}$ of the TMKLO.
	Then, the snapshots characterizing the TVG have to be analyzed, starting from $G_{t_{br}}$ and following their order. In particular, for each snapshot $G_{t_{i}}$, $t_{i} \geq t_{br}$, we need to verify that: 
	\begin{enumerate}
	\item edges with only one endpoint already included in some level of the TMKLO are up enough to satisfy ${\sf RCD}()$ and 
	\item whenever ${\sf RCD}()$ is satisfied for a given edge $e_{i, j}$, we need to check if it allows $p_j$ to be part of the TMKLO as it satisfies one condition among AK2 and AK3.
	\end{enumerate}
The algorithm ends when a TMKLO is found or when all the snapshots have been analyzed (and in the latter case we can infer that no TMKLO exists for the considered message on the given TVG). 
Assuming that $\mathcal{G}$ spans over $T$ time instants, the complexity of this algorithm is:
	$$O(|T||E) + O(|V|+|E|)  = O (|V| + |T||E|)$$	
\hfill$\Box$
\end{proof}

\begin{theorem}
	\label{th:necFKO}
Let $\mathcal{G}=( V,E,\rho,\zeta)$ be a TVG, let $p_s$ be a node called \emph{source} that broadcasts a message $m$ at time $t_{br}$ and let us assume $f$-locally bounded Byzantine failures.
If $p_s$ has access to an FKO then it is able to detect if eventually every correct process will deliver $m$.
\end{theorem}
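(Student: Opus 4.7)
The plan is to build a decision procedure executable by $p_s$ out of Lemma~\ref{lem:FKO-TMKLO} together with the two bounds on TMKLO thickness established in Theorems~\ref{th:liveness_sufficiency} and~\ref{th:liveness_necessity}. Since the FKO hands $p_s$ the complete description $\mathcal{G}=(V,E,\rho,\zeta)$, the entire procedure can be run locally at broadcast time.

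The first step is to invoke the algorithm of Lemma~\ref{lem:FKO-TMKLO} on $\mathcal{G}$ with $k=2f+1$. If a TMKLO is found, Theorem~\ref{th:liveness_sufficiency} guarantees that every correct process eventually delivers $m$, and $p_s$ answers YES. Otherwise, $p_s$ re-runs the procedure with $k=f+1$; if no TMKLO exists, Theorem~\ref{th:liveness_necessity} rules out the possibility that every correct process delivers $m$, and $p_s$ answers NO.

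The interesting case is when $\mathcal{G}$ admits a TMKLO with $k=f+1$ but not with $k=2f+1$, since in this gap neither theorem is conclusive. Here $p_s$ enumerates, using the underlying graph $G=(V,E)$ extractable from $\mathcal{G}$, every subset $B\subseteq V\setminus\{p_s\}$ that is compatible with the $f$-locally bounded assumption, that is, every $B$ such that $|B\cap N_G(p)|\le f$ for every $p\in V$. For each such $B$, $p_s$ constructs the reduced TVG $\mathcal{G}_B$ obtained by deleting the nodes of $B$ and their incident edges and their incident presence/latency values, and it checks via Lemma~\ref{lem:FKO-TMKLO} (with $k=f+1$) whether a TMKLO of $\mathcal{G}_B$ from $p_s$ spans all the surviving nodes. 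If this succeeds for every admissible $B$, no adversarial placement can block a correct process from collecting its $f+1$ copies, so $p_s$ answers YES; if it fails for some $B$, then that placement exhibits a correct node that cannot satisfy any of AK1--AK3 in the residual TVG, and $p_s$ answers NO.

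The main obstacle is not correctness but complexity: the enumeration of Byzantine placements in the third step is exponential in $|V|$, mirroring the NP-hardness of the analogous check in static networks reported in~\cite{ichimura2010new}. The two preliminary calls to Lemma~\ref{lem:FKO-TMKLO} are polynomial in $|V|$, $|E|$ and the time horizon $T$, so they filter out the easy instances; only genuinely ambiguous TVGs force the exhaustive search. Thus the FKO suffices to make conscious termination decidable, which is precisely what the theorem claims.
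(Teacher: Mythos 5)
Your proposal is correct and takes essentially the same approach as the paper: compute TMKLOs via Lemma~\ref{lem:FKO-TMKLO}, settle the clear cases with the sufficient ($k=2f+1$) and necessary ($k=f+1$) conditions of Theorems~\ref{th:liveness_sufficiency} and~\ref{th:liveness_necessity}, and in the remaining gap check the $(f+1)$-TMKLO on every residual TVG obtained by removing an admissible Byzantine placement. The paper's own proof does exactly this (merely testing the necessary condition first) and likewise notes the exponential/NP-hard cost of the exhaustive placement check inherited from the static setting.
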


\begin{proof}
	The claim follows by considering that in order to assess the Conscious Termination of DCPA, the source process $p_s$ needs to  compute a TMKLO (i.e., it needs to check that eventually each correct process will be placed in a level) and due to Lemma \ref{lem:FKO-TMKLO} this can be done by accessing FKO. 
In particular, to detect Conscious Termination, a process $p_i$ can first verify if the necessary condition holds and this can be done by computing a TMKLO with $k \ge f+1$. If not, $p_i$ can simply infer that m will not be delivered by every correct process.
Contrarily, it can verify if the sufficient condition holds computing a TMKLO with $k \ge 2f+1$. If it exists, $p_i$ can infer that eventually every correct process will deliver the message otherwise, it needs to verify the necessary condition in every subgraph obtained by $\mathcal{G}$ removing all the possible disposition of Byzantine processes (remind that getting this answer corresponds to solve an NP-Complete problem even considering a static networks, 
thus the same intractability follows also on dynamic networks).
If the necessary condition is always satisfied, it can infer Conscious Termination otherwise not. \hfill$\Box$
\end{proof}

Let us note that if a process has the capability of computing the \emph{TMKLO} for a message $m$ sent at time $t_{br}$, then it can also establish a lower bound and an upper bound on the time needed by every correct process to deliver $m$ simply evaluating the maximum level of the TMKLO that satisfy respectively the necessary and the sufficient condition for DCPA.

\begin{theorem}
	\label{lem:time_bounds_low}
	Let $\mathcal{G}=( V,E,\rho,\zeta)$ be a TVG and let $p_s$ be a node called \emph{source} that broadcasts a message $m$ at time $t_{br}$ and let us assume $f$-locally bounded Byzantine failures. Let $P_{f+1}=\{L_{t_0}, L_{t_1} \dots L_{t_x}\}$ be the TMKLO with $k= f+1$ associated to $m$ and let $t_{max}^{f+1}$ be the time associated to the last level of $P_{f+1}$. 
%
	Let	assume the existence of the TMKLO with $k= 2f+1$ associated to $m$, 		
	$P_{2f+1}=\{L_{t_0}, L_{t_1} \dots L_{t_x}\}$,   and let $t_{max}^{2f+1}$ be the time associated to the last level of $P_{2f+1}$. 
	The computed TMKLOs provide respectively a lower bound and an upper bound for BL such that:
	
$$t_{max}^{f+1} -t_{br} \le BL \le t_{max}^{2f+1}-t_{br}$$
\end{theorem}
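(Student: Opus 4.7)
The plan is to prove the two inequalities separately by extracting quantitative timing information from the arguments already used for the sufficient and necessary conditions. Let $\tau(p)$ denote the time at which a correct process $p$ delivers $m$ in a given execution, and let $t_p^k$ denote the time associated with the level containing $p$ in $P_k$.

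For the upper bound $BL \le t_{max}^{2f+1}-t_{br}$, I would refine the induction implicit in Theorem~\ref{th:liveness_sufficiency} to show that every correct process $p$ satisfies $\tau(p) \le t_p^{2f+1}$. Proceed by induction on the levels of $P_{2f+1}$ in order of increasing time. The source delivers at $t_{br}$ via AC1 (base case, AK1). A node $p$ entering via AK2 receives $m$ directly from $p_s$ (who retransmits at every time unit after $t_{br}$) across an RCD-certified channel, hence delivers via AC2 by $t_p^{2f+1}$. A node $p$ entering via AK3 has $2f+1$ witnesses $p_1,\ldots,p_{2f+1}$ with earlier level times; by the inductive hypothesis every correct $p_i$ has already delivered and is retransmitting, and the $f$-locally bounded assumption guarantees that at least $f+1$ of them are correct, so $p$ accumulates $f+1$ distinct copies across the RCD-certified channels and delivers via AC3 no later than $t_p^{2f+1}$. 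Taking the maximum over all correct processes yields the upper bound.

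For the lower bound $BL \ge t_{max}^{f+1}-t_{br}$, the goal is to exhibit a worst-case execution in which some correct process cannot deliver before $t_{max}^{f+1}$. I would consider the adversarial strategy in which every Byzantine node is silent, so correct processes receive no spurious copies and must rely entirely on correct neighbours. In such an execution, by induction on chronological delivery events, $\tau(p) \ge t_p^{f+1}$ for every correct $p$: the source is trivial (matching $L_{t_{br}}$ of $P_{f+1}$); AC2-delivery at $\tau(p)$ instantiates AK2 with $k=f+1$ because the direct channel from $p_s$ has been RCD-certified at some $t'\ge t_{br}$; AC3-delivery at $\tau(p)$ now occurs only through $f+1$ distinct correct neighbours, each of which had delivered earlier (inductive hypothesis) and forwarded through RCD-certified channels, which instantiates AK3 with $k=f+1$. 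Since $P_{f+1}$ is the \emph{minimum} $(f+1)$-level ordering, $\tau(p) \ge t_p^{f+1}$; maximising over correct processes yields the lower bound.

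The main obstacle is the treatment of Byzantine behaviour in the lower bound: a cooperative Byzantine adversary that fabricates and forwards $m$-like messages could apparently accelerate AC3 at correct processes, weakening any purely structural lower bound. The resolution is to read the bound in the worst-case adversarial sense and to produce the specific silent strategy above, which realises the inequality. Under that strategy every message triggering AC3 necessarily comes from a correct sender that has itself satisfied the $(f+1)$-acceptance structure before forwarding, so the inductive step aligns exactly with the definition of $P_{f+1}$ and the argument closes cleanly.
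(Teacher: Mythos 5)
Your proposal is correct, and it is in fact more careful than the paper's own argument, though built on the same underlying idea that the TMKLO levels track delivery times. The paper proves both inequalities by contradiction, asserting directly that ``a level is created each time a process not yet in the TMKLO delivers the message,'' i.e.\ it identifies level times with delivery events and immediately concludes $t_i=t_{max}^{f+1}$ (resp.\ $t_{max}^{2f+1}$) for the last deliverer; it never separates the two directions of that identification. You instead decompose the statement into two quantitative refinements of Theorems~\ref{th:liveness_sufficiency} and~\ref{th:liveness_necessity}: an induction over the levels of $P_{2f+1}$ showing $\tau(p)\le t_p^{2f+1}$ (using the retransmit-every-round rule, the ${\sf RCD}$ certificates built into AK2/AK3, and the fact that at most $f$ of the $2f+1$ witnesses are Byzantine), and an induction over chronological deliveries showing $\tau(p)\ge t_p^{f+1}$ by instantiating AK2/AK3 from the actual delivery certificates and invoking minimality of the TMKLO. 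Your explicit treatment of the lower bound is the real added value: you notice that a cooperative Byzantine neighbour forwarding $m$ can make a correct process satisfy AC3 before its $P_{f+1}$ level, so the lower bound only holds in the worst-case adversarial reading, and you realise it with the silent-Byzantine execution; the paper's proof silently assumes this correspondence. One residual looseness, shared with the paper, is that $t_{max}^{f+1}$ is the time of the last level over \emph{all} nodes, so if that level happened to contain only Byzantine nodes the maximisation over correct processes would give a slightly weaker bound; this does not distinguish your argument from the paper's.
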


\begin{proof}
\emph{Lower Bound:}
Let us assume for the purpose of contradiction that BL can be lower than $t_{max}^{f+1} -t_{br}$. It follows that the last process $p_i$ delivering $m$ does it at a time $t_{i} < t_{max}^{f+1}$.
Given the definition of TMKLO with $k=f+1$, a level $L_x$ is created each time that a process not yet inserted in the TMKLO delivers a message (due to AK2 or AK3). As a consequence, the last level of the TMKLO is created when the last process delivers the message. Thus, considering that $p_i$ is the last process delivering the message, it follows that $t_i$ is the time associated to the last level.
Given $P_{f+1}$, it follows that $t_i = t_{max}^{f+1}$  and we have a contradiction.

\emph{Upper Bound:}
Let us assume for the purpose of contradiction that BL can be greater than $t_{max}^{2f+1} -t_{br}$. It follows that the last process $p_i$ delivering $m$ does it at a time $t_{i} > t_{max}^{2f+1}$.
Given the definition of TMKLO with $k=2f+1$, a level $L_x$ is created each time that a process not yet inserted in the TMKLO delivers a message (due to AK2 or AK3). As a consequence, the last level of the TMKLO is created when the last process delivers the message.
Thus, considering that $p_i$ is the last process delivering the message, it follows that $t_i$ is the time associated to the last level.
Given $P_{2f+1}$, it follows that $t_i = t_{max}^{2f+1}$  and we have a contradiction.\hfill$\Box$
\end{proof}

Remind that, as the sufficient condition we provided is not strict, a TMKLO with $k= 2f+1$ could not exist even if the reliable broadcast is achievable. It is also possible to provide a stricter upper bound for BL as we explained inside the proof of Theorem \ref{th:necFKO}, but is not practical to compute.	
Finally, let us remark that the knowledge on the underlying topology is not enough on dynamic networks to argue on liveness.

\begin{remark}
	\label{th:weaker_oracle}
	Let $\mathcal{G}=( V,E,\rho,\zeta)$ be a TVG and let $p_s$ be a node called \emph{source} that broadcasts a message $m$ at time $t_{br}$ and let us assume $f$-locally bounded Byzantine failures. 
If a process $p_s$ has access only to a PKO (and not to an FKO) then it is not able to detect either Conscious Termination and Bounded Broadcast Latency.
Indeed, as we highlighted in section \ref{subsec:cpalive}, moving on dynamic network the knowledge on the underlying graph is not enough, because specific sequences of edge appearances are required in order to guarantee the message propagation (let us take again Figure \ref{fig:TVG} as clarifying example). Thus, a PKO is not enough in arguing on liveness.
The same can be said about Bounded Broadcast Latency as PKO provides no information about the time instants when the edges will appear.
\end{remark}

%
\subsection{Detecting DCPA Liveness on Restricted TVGs}
\label{subsec:livespecialtvg}
\noindent
Casteigts \emph{et al.}~\cite{casteigts2012time} defined a hierarchy of TVG classes based on the strength of the assumptions made about appearance of edges. So far, we considered the most general TVG\footnote{Class $1$ TVG according to Casteigts \emph{et al.}~\cite{casteigts2012time}}. In the following, we consider two more specific classes of the hierarchy where we show that liveness can be detected using oracles weaker than FKO. 
In particular, we consider the following classes that are suited to model recurring networks:
\begin{itemize}
	\item \textbf{\emph{Class recurrence of edges, ER }}: if an edge $e$ appears once, it appears infinitively often\footnote{Class 6 TVG in Casteigts \emph{et al.}~\cite{casteigts2012time}.}.	
	\item \noindent\textbf{\emph{Class time bounded recurrences, TBER}}: if an edge $e$ appears once, it appears infinitively often and there exist an upper bound $\Delta$ between two consecutive appearances of $e$\footnote{Class 7 TVG in Casteigts \emph{et al.}~\cite{casteigts2012time}.}.	
\end{itemize}
Let us recall that assuming predicate ${\sf RCD}(e_{i, j}, t)={\sf true}$ for every edge $e_{i, j}$ at some time $t$ is necessary to guarantee liveness. While considering classes ER and TBER, such condition must be satisfied infinitely often, otherwise it is easy to show that the results presented in the previous section still apply.
Let us also note that the conditions we defined in Section \ref{subsec:cpalive} are related to a single broadcast generated by a specific source $p_s$ i.e., for a source $p_s$ broadcasting a message at time $t_{br}$ the conditions must hold from $t_{br}$ on. 
Contrarily, exploiting the recurrence of edges it is possible to define different conditions that are valid for every broadcast from the same source  $p_s$, independently from when it starts.
\subsubsection{Detecting DCPA Liveness in ER TVG}

In this section, we prove that considering TVG of class ER, we can get the following results: \emph{(i)} PKO (an oracle weaker than FKO) is enough to enable Conscious Termination, \emph{(ii)} despite the more specific TVG considered, FKO is still required to establish upper bounds for \emph{BL}. Intuitively, this results follows from the fact that PKO allows to determine whether a MKLO exists on the static underlying graph, and this is enough to detect if eventually every correct process will be able to deliver the message. However, given the absence of information on when each edge is going to appear, it is impossible to compute an upper bound on the time required to accomplish the broadcast.

\begin{lemma}\label{lem:PKO-MKLO}
	Let $\mathcal{G}=( V,E,\rho,\zeta)$ be a TVG and 
	let $G=(V, E)$ be the associated underlying graph.
	If $p_s$ has access to a PKO then it can compute a MKLO on $G$.
\end{lemma}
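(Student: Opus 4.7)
The plan is to give a constructive algorithm that computes the MKLO on $G$ from the information supplied by PKO. Since PKO provides the full underlying static graph $G=(V,E)$, the source $p_s$ has all topological information needed to simulate the iterative definition of the MKLO directly.

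The algorithm proceeds in rounds, building the partition one level at a time. Initialize $L_0=\{p_s\}$ and $L_1=N_s$ (the neighborhood of $p_s$ in $G$), which is available from $E$. For $i\geq 2$, I would maintain the already-assigned set $U_{i-1}=\bigcup_{j=0}^{i-1}L_j$ and form $L_i$ as the set of nodes $p\in V\setminus U_{i-1}$ satisfying $|N_p\cap U_{i-1}|\geq k$. Repeat until a round produces $L_i=\emptyset$, at which point the procedure halts and returns the partition $\{L_0,L_1,\dots,L_{i-1}\}$ (nodes that never qualify are simply left out, matching the definition of MKLO which only partitions the reachable nodes).

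To argue correctness, I would show two things. First, the algorithm terminates: since each nonempty $L_i$ consumes at least one new vertex from the finite set $V$, at most $|V|$ rounds occur. Second, the produced partition is exactly the MKLO: by induction on $i$, the set $U_i$ computed by the algorithm coincides with $\bigcup_{j=0}^{i}L_j$ as defined in the MKLO definition, because the MKLO places a node in the smallest index $i$ for which the threshold condition on $U_{i-1}$ holds, which is precisely what the round-by-round greedy test checks. Uniqueness of the MKLO then yields equality.

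The main obstacle, if any, is clerical rather than conceptual: one must be careful that the algorithm only relies on data exposed by PKO (i.e., $V$ and $E$, with no reference to $\rho$ or $\zeta$), and that the termination condition correctly treats nodes that belong to no level. A complexity remark comes for free: checking the threshold condition for each remaining vertex against $U_{i-1}$ can be done in $O(|E|)$ per round, giving an $O(|V|\cdot|E|)$ bound (easily tightened with a modified BFS that maintains per-vertex counters of edges toward $U_{i-1}$), which is polynomial in the size of $G$ as required.
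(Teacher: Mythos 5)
Your proposal is correct and follows essentially the same route as the paper: the paper's proof simply observes that PKO supplies the underlying graph $G$, that the MKLO is a purely topological notion, and that it can therefore be computed with a modified breadth-first search (citing Litsas \emph{et al.}), which is exactly the level-by-level greedy construction you spell out in detail. Your explicit termination, correctness-by-induction, and $O(|V|\cdot|E|)$ complexity arguments are a fleshed-out version of what the paper delegates to the cited BFS procedure.
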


\begin{proof}
	The PKO provides knowledge on the topology of $G$.
	We reminded in Definition 1 that the MKLO is a partition of the nodes on the base of a topological conditions. 
	It follows that it is possible to verify the MKLO on $G$ with PKO through a modified breath-first search \cite{litsas2013graph}.
	\hfill$\Box$
\end{proof}

\begin{lemma}
	\label{lem:particular_cases_ER}	
	Let $\mathcal{G}=( V,E,\rho,\zeta)$ be a TVG of class \emph{ER} that ensures ${\sf RCD}()$ infinitively often, let $G=(V, E)$ be the static underlying graph of $\mathcal{G}$, let $p_s$ be a node called \emph{source} and let us assume $f$-locally bounded Byzantine failures. 
	If there exists the \emph{MKLO} of $G=(V, E)$ associated to $p_s$ then there always exists the \emph{TMKLO} of $\mathcal{G}$ associated to a message $m$ sent by $p_s$ with the same $k$.
\end{lemma}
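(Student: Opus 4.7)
The plan is to show by induction on the levels of the static MKLO that every node in the MKLO eventually satisfies the $k$-acceptance function on $\mathcal{G}$, which ensures the existence of a TMKLO with the same $k$ (though possibly with different level times). The key leverage is the ER assumption combined with the hypothesis that $\mathsf{RCD}()$ holds infinitely often: every edge $e_{i,j} \in E$ of the underlying graph will become ``deliverable'' (i.e., satisfy $\mathsf{RCD}$) at arbitrarily late times, so we can always wait for a suitable crossing time after a predecessor has delivered.

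First I would fix a MKLO $\{L_0, L_1, \dots, L_x\}$ of $G$ from $p_s$ with parameter $k$, and handle the base cases: $p_s \in L_0$ trivially satisfies $\mathcal{A}_k(p_s,t)=1$ for every $t \geq t_{br}$ via AK1; for a node $p \in L_1$, since $e_{s,p} \in E$ and the TVG is of class ER with $\mathsf{RCD}$ holding infinitely often on this edge, there exists $t' \geq t_{br}$ with $\mathsf{RCD}(p_s,p,t')=\mathsf{true}$, so AK2 is satisfied and $\mathcal{A}_k(p,t)=1$ for all sufficiently large $t$.

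Next I would carry out the inductive step: suppose every node in $L_0 \cup \dots \cup L_{i-1}$ has its $k$-acceptance function eventually equal to $1$, and consider $p \in L_i$ for $i>1$. By definition of MKLO, $p$ has at least $k$ neighbors $p_1,\dots,p_k$ in $\bigcup_{j<i} L_j$, each of which, by the induction hypothesis, satisfies $\mathcal{A}_k(p_\ell, t_\ell)=1$ from some time $t_\ell$. Because every edge $e_{p_\ell, p}$ belongs to $E$ and $\mathcal{G}$ is class ER with $\mathsf{RCD}$ satisfied infinitely often, for each $\ell$ we can pick some $t'_\ell \geq t_\ell$ with $\mathsf{RCD}(p,p_\ell, t'_\ell)=\mathsf{true}$. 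This is precisely the premise of AK3, so $\mathcal{A}_k(p,t)=1$ for all $t \geq \max_\ell (t'_\ell + \zeta(e_{\ell,p}, t'_\ell))$, closing the induction.

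Finally, to conclude, I would observe that by the definition of TMKLO we can assign each node $p$ to level $L_{t(p)}$ with $t(p) = \min\{t : \mathcal{A}_k(p,t)=1\}$, and the above argument shows $t(p)$ is well-defined and finite for every node appearing in the MKLO; hence the TMKLO exists with the same parameter $k$. The main subtlety, and the one I would spell out carefully, is the induction step: one must argue that the ``infinitely often'' clause of ER (combined with $\mathsf{RCD}$ holding infinitely often) lets us select the crossing times $t'_\ell$ \emph{after} the predecessors have delivered, rather than just noting that the edge exists in $G$. Aside from this, the argument is a straightforward translation of the static topological partition into the temporal one.
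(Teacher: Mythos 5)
Your proposal is correct and follows essentially the same route as the paper: the paper's proof also maps the static MKLO into a TMKLO by noting that the source satisfies AK1 and every other MKLO node, having either an edge to the source or $k$ predecessors in lower levels, eventually satisfies AK2 or AK3 thanks to the recurrent ${\sf RCD}()$ assumption. Your version simply makes the level-by-level induction and the choice of crossing times after the predecessors' delivery explicit, which the paper leaves implicit.
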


\begin{proof}
	We prove the claim showing a mapping from MKLO to TMKLO.
	The source is placed inside the \emph{TMKLO} at level $t_{br}$. Then, given the assumption on the channels and that every node in the MKLO has either (i) an edge connecting it with the source (ii) and/or $k$ neighbors already included in MKLO, it follows that every node eventually satisfies at least one between AK2 and AK3.
	\hfill$\Box$
\end{proof}

\begin{theorem}
	\label{th:necPKO}
Let $\mathcal{G}=( V,E,\rho,\zeta)$ be a TVG of class ER that ensures ${\sf RCD}()$ infinitively often, and let $p_s$ be a node called \emph{source} that broadcasts $m$ at time $t_{br}$, and let us assume $f$-locally bounded Byzantine failures.
If $p_s$ has access to a PKO, then it is able to detect if eventually every correct process delivers $m$.
\end{theorem}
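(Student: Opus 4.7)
The plan is to reduce the detection of liveness on $\mathcal{G}$ to a purely topological question on the underlying graph $G$, which is exactly what a PKO allows $p_s$ to inspect. The key reduction is that, in a class ER TVG with ${\sf RCD}()$ recurring infinitely often, a TMKLO of $\mathcal{G}$ from $p_s$ with parameter $k$ exists if and only if an MKLO of $G$ from $p_s$ with the same parameter $k$ exists. The ``if'' direction is exactly Lemma~\ref{lem:particular_cases_ER}. For the ``only if'' direction, I would argue by induction on the temporal levels of any TMKLO: a non-source node placed by AK2 must have had ${\sf RCD}()$ hold on an edge to $p_s$, so it is a neighbor of $p_s$ in $G$ and fits in $L_1$ of an MKLO; a node placed by AK3 has $k$ distinct neighbors in strictly earlier temporal levels, and since ${\sf RCD}()$ can only hold on edges that belong to $G$, those $k$ nodes are graph-neighbors which, by the inductive hypothesis, already sit in earlier levels of an MKLO of $G$.

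With this equivalence in hand, the detection procedure for $p_s$ mirrors the one used in the proof of Theorem~\ref{th:necFKO}, but is executed on $G$ via Lemma~\ref{lem:PKO-MKLO} instead of on the full TVG. Concretely, $p_s$ first checks whether an MKLO with $k = 2f+1$ exists on $G$: if it does, the equivalence together with the sufficient condition of Theorem~\ref{th:liveness_sufficiency} implies that every correct process eventually delivers $m$. Otherwise, $p_s$ checks whether an MKLO with $k = f+1$ exists on $G$: if not, then by the equivalence no TMKLO with $k = f+1$ exists on $\mathcal{G}$ either, so the necessary condition of Theorem~\ref{th:liveness_necessity} is violated and liveness cannot be guaranteed. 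In the intermediate case, $p_s$ enumerates every $f$-locally bounded Byzantine placement in $G$ and checks existence of an MKLO with $k = f+1$ in the induced correct-node subgraph; liveness is ensured iff the check passes for every placement.

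The main obstacle is the intermediate case, which inherits the NP-hardness already observed for static networks~\cite{ichimura2010new}: the enumeration of Byzantine placements is worst-case exponential. However, the claim asks only for detectability, so a (possibly expensive) decision procedure suffices. The subtlety I would watch most carefully is the converse direction of the equivalence: one must verify that the $k$ neighbors used to satisfy AK3 at a TMKLO level are genuinely distinct graph-neighbors of the node in $G$ (immediate from the definition of underlying graph) and that the inductive transfer from temporal levels to MKLO levels respects the ordering, even though the two partitions generally differ.
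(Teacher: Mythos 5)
Your proposal is correct and follows essentially the same route as the paper: the paper's proof of this theorem is a one-liner invoking Lemma~\ref{lem:PKO-MKLO} (PKO lets $p_s$ compute the MKLO of $G$) and Lemma~\ref{lem:particular_cases_ER} (MKLO on $G$ implies TMKLO on $\mathcal{G}$ under ER with recurring ${\sf RCD}()$), combined implicitly with the detection scheme of Theorem~\ref{th:necFKO}. The one thing you add beyond the paper's text is the converse direction (a TMKLO on $\mathcal{G}$ forces an MKLO on $G$, since ${\sf RCD}()$ can only hold on edges of the underlying graph), which the paper leaves implicit even though it is what justifies answering ``no'' when the $k=f+1$ MKLO is absent; making it explicit, as you do, strengthens rather than changes the argument.
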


\begin{proof}
It follows from Lemma \ref{lem:PKO-MKLO} and Lemma \ref{lem:particular_cases_ER}\hfill$\Box$
\end{proof}
\subsubsection{Detecting DCPA Liveness in TBER TVG}
The liveness condition enabling CPA to enforce reliable broadcast relays on the network topology, therefore an oracle weaker that FKO cannot enable Conscious Termination unless further assumptions are made. On the other hand, the weaker oracle SKO allows a process to compute Bounded Broadcast Latency.

\begin{lemma}\label{lem:c7_1}
	Let $\mathcal{G}=( V,E,\rho,\zeta)$ be a TVG of class \emph{TBER} where each edge $e_{i, j}$ reappears in at most $\Delta$ time instants satisfying ${\sf RCD}(e_{i, j}, t)$. Let $\delta_{max} = \text{max} (\zeta(e,t))$. Let $p_s$ be the source and let us assume $f$-locally bounded Byzantine failures. The Broadcast Latency BL is upper bounded by	
	$$ BL \le |V|(\delta_{max} + \Delta)$$
\end{lemma}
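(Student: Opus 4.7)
The plan is to bound $BL$ via a level-by-level induction that uses a minimum $k$-level ordering of the static underlying graph, exploiting the fact that TBER is a subclass of ER so that Lemma \ref{lem:particular_cases_ER} applies. First I would observe that the bound is only meaningful when reliable broadcast actually completes; assuming completion, I would invoke Lemma \ref{lem:particular_cases_ER} together with Theorem \ref{th:liveness_sufficiency} to obtain a TMKLO with $k = 2f+1$ on $\mathcal{G}$ that partitions $V$ into levels $L_0 = \{p_s\}, L_1, \ldots, L_x$, with $x \leq |V| - 1$ since every non-empty level introduces at least one previously-unplaced node.

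Next I would prove by induction on $i$ that every correct $p_j \in L_i$ delivers $m$ by time $t_{br} + i(\Delta + \delta_{max})$. The base case $i=0$ is immediate: $p_s$ delivers $m$ at $t_{br}$ via AC1. For the inductive step, $p_j \in L_i$ either has $p_s$ as a neighbor or has at least $2f+1$ neighbors scattered across $L_0, \ldots, L_{i-1}$; by the inductive hypothesis each such predecessor has delivered $m$ by time $t_{br} + (i-1)(\Delta + \delta_{max})$ and, by the DCPA extension, retransmits $m$ to $p_j$ at every subsequent time unit. By the TBER assumption the edge connecting $p_j$ to each such predecessor satisfies ${\sf RCD}$ at some instant $\tau$ with $\tau \leq t_{br} + (i-1)(\Delta + \delta_{max}) + \Delta$; the retransmission issued at $\tau$ then arrives no later than $\tau + \delta_{max} \leq t_{br} + i(\Delta + \delta_{max})$. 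Under the $f$-locally bounded failure model, at most $f$ of the $2f+1$ predecessors can be Byzantine and drop the message, so $p_j$ still collects at least $f+1$ genuine copies from distinct neighbors and delivers via AC3 (or AC2 when $i=1$) within the allotted window.

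Combining the two steps, every correct process delivers $m$ by time $t_{br} + (|V|-1)(\Delta + \delta_{max}) \leq t_{br} + |V|(\Delta + \delta_{max})$, yielding $BL \leq |V|(\Delta + \delta_{max})$. The main obstacle will be cleanly coupling the three timing ingredients, namely DCPA's unconditional per-time-unit retransmission, the TBER guarantee that every $\Delta$-window contains an ${\sf RCD}$-true instant, and the latency bound $\zeta \leq \delta_{max}$, into the single $\Delta + \delta_{max}$ per-level increment, taking care that the reliable-channel property is invoked only over time windows in which ${\sf RCD}$ remains true throughout the crossing.
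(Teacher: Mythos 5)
Your core induction (at most $|V|$ levels, each level adding at most $\Delta + \delta_{max}$: wait at most $\Delta$ for an ${\sf RCD}$-true instant on a recurring edge, then at most $\delta_{max}$ for the crossing, with DCPA's perpetual retransmission supplying the message at that instant) is exactly the argument the paper sketches informally as its ``worst case'' reasoning, and your version is in fact more rigorous than the paper's.

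However, the way you obtain the $2f{+}1$ level structure is backwards and constitutes a real gap. You claim that, ``assuming completion,'' Lemma~\ref{lem:particular_cases_ER} together with Theorem~\ref{th:liveness_sufficiency} yields a TMKLO with $k=2f+1$. Theorem~\ref{th:liveness_sufficiency} goes in the opposite direction (existence of the ordering implies delivery, not the converse), and Lemma~\ref{lem:particular_cases_ER} needs the MKLO of the underlying graph as a hypothesis, not the completion of the broadcast. Completion only guarantees, via Theorem~\ref{th:liveness_necessity}, an ordering with $k=f+1$; with merely $f+1$ predecessors per level your timing step collapses, because up to $f$ of those predecessors may be Byzantine and withhold or arbitrarily delay their copies, so no $(\Delta+\delta_{max})$ per-level increment can be extracted. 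The correct setup, implicit in the paper's subsection and explicit in Lemma~\ref{lem:c7_2} and Theorem~\ref{th:necFKOlat_TBER}, is to assume that the MKLO with $k=2f+1$ exists on the underlying static graph $G$ (this is what makes the bound computable at all, since a guaranteed delivery time cannot depend on Byzantine cooperation) and then run precisely your induction on its levels, noting also that your structural claim about $p_j$ having $2f+1$ neighbors in earlier levels is a property of the MKLO of $G$, not of the TMKLO of $\mathcal{G}$. With that hypothesis substituted for your ``assuming completion'' step, the rest of your proof is sound and matches the paper's intent.
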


\begin{proof}
	%
	Given the assumptions on the TVG, we know that every edge reappears in $\Delta$ and satisfies ${\sf RCD}()$.
	The worst case scenario, with respect the message propagation, is the one in which every node has to wait $\Delta$ to forward a message.
	The worst case scenario, with respect the network topology, is the one where every process has to wait the last one which has delivered to deliver (in other words, the partitions of the MKLO evaluated over the underlying graph $G(V,E)$, with the exception of the second level, have size equals to 1).
	%
	\hfill$\Box$
\end{proof}


\begin{lemma}\label{lem:c7_2}
	Let $\mathcal{G}=( V,E,\rho,\zeta)$ be a TVG of class \emph{TBER} where each edge $e_{i, j}$ reappears in at most $\Delta$ time instants
	{
		satisfying ${\sf RCD}(e_{i, j}, t)$. 
		Let $\delta_{max} = \text{max} (\zeta(e,t))$.
	}	
	Let $p_s$ be the source and let us assume $f$-locally bounded Byzantine failures. 
	Let $P_{2f+1}=\{L_{t_0}, L_{t_1} \dots L_{t_x}\}$ be the MKLO with $k= 2f+1$ computed on the underlying graph $G=(V, E)$ 
	(if exists)	and let $S_{2f+1}$ be size of $P_{2f+1}$. 
	
	An upper bound for BL can be computed from the MKLO with $k=2f+1$. In particular:
	
	$$BL \le S_{2f+1}(\delta_{max} + \Delta)$$	
\end{lemma}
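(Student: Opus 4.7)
The plan is to refine the worst-case argument of Lemma~\ref{lem:c7_1} by exploiting the layered structure provided by the MKLO. First, since $P_{2f+1}$ exists on the underlying graph $G$, Lemma~\ref{lem:particular_cases_ER} guarantees that a TMKLO of $\mathcal{G}$ with $k=2f+1$ exists, and therefore by Theorem~\ref{th:liveness_sufficiency} every correct process eventually delivers $m$ under DCPA. What remains is to bound, level by level, the time at which each delivery occurs.

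The key step is an induction on the level index $i \in \{0, 1, \dots, S_{2f+1}-1\}$ of $P_{2f+1}$, showing that every correct process $p \in L_{i}$ delivers $m$ no later than $t_{br} + i(\delta_{max} + \Delta)$. The base case $i=0$ is immediate: $L_{0}=\{p_s\}$ delivers at $t_{br}$ via AC1, then retransmits $m$ infinitely often as prescribed by DCPA. For $i=1$, each $p_j \in L_1$ is a neighbor of $p_s$ in $G$; by the TBER assumption the edge $e_{s,j}$ reappears within $\Delta$ time units while satisfying RCD, and the transmission takes at most $\delta_{max}$, so $p_j$ delivers by AC2 within $\Delta + \delta_{max}$. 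For the inductive step $i>1$, assume every correct process in levels $L_0, \dots, L_{i-1}$ has delivered by $t_{br} + (i-1)(\delta_{max}+\Delta)$. A node $p \in L_{i}$ has, by definition of MKLO with $k=2f+1$, at least $2f+1$ neighbors in $\bigcup_{j<i} L_{j}$; under the $f$-locally bounded assumption at least $f+1$ of them are correct, and each keeps retransmitting $m$ infinitely often. Again by TBER, each of the $f+1$ corresponding edges reappears within $\Delta$ satisfying RCD, so within an additional $\Delta + \delta_{max}$ time units $p$ receives $f+1$ copies of $m$ from distinct neighbors and delivers $m$ by AC3.

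The last correct process thus delivers no later than $t_{br} + (S_{2f+1}-1)(\delta_{max}+\Delta)$, and in particular $BL \le S_{2f+1}(\delta_{max}+\Delta)$, as claimed.

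The main obstacle is the careful temporal bookkeeping in the inductive step: $\Delta$ is the maximum gap between two consecutive appearances of an edge for which RCD holds, not a fixed period, so one must argue that once a correct sender has delivered $m$ and is retransmitting at every time unit, the \emph{next} RCD-satisfying appearance of each relevant edge occurs within $\Delta$ and then the message crosses in at most $\delta_{max}$. Because DCPA retransmits at every time unit and the TBER assumption bounds the inter-appearance gap uniformly, this additive $\Delta + \delta_{max}$ overhead per level is enough, and the layered structure of $P_{2f+1}$ replaces the $|V|$ factor of Lemma~\ref{lem:c7_1} with $S_{2f+1}$.
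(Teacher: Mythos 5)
Your proof is correct and follows essentially the same route as the paper: the paper's own argument combines Lemma~\ref{lem:particular_cases_ER} with the observation that every node in level $L_i$ of the MKLO delivers within $i(\delta_{max}+\Delta)$ time units, which is exactly the level-by-level induction you spell out in detail. Your explicit inductive bookkeeping (AC1/AC2 for levels $0$ and $1$, then $2f+1$ predecessors of which at least $f+1$ are correct for AC3) is just a fleshed-out version of the paper's terser proof.
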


\begin{proof}
	Given the assumptions on the TVG $\mathcal{G}$ we know that every edge reappears in $\Delta$ and guarantees ${\sf RCD}()$ 
	.
	The worst case scenario with respect the message propagation is the one where every node as to wait $\Delta$ to forward a message.

	The bound follows by Theorem \ref{lem:time_bounds_low} and Lemma \ref{lem:particular_cases_ER}, noting that every node in level $L_i$ delivers in $(\delta_{max} + \Delta)i$ time instants.
	\hfill$\Box$
\end{proof}

\begin{theorem}\label{th:necFKOlat_TBER}
Let $\mathcal{G}=( V,E,\rho,\zeta)$ be a TVG of class \emph{TBER}  where each edge $e_{i, j}$ reappears in at most $\Delta$ time instants
satisfying ${\sf RCD}(e_{i, j}, t)$. 
Let $\delta_{max} = \text{max} (\zeta(e,t))$. Let $p_s$ be a node called \emph{source} that broadcasts $m$ at time $t_{br}$, and let us assume $f$-locally bounded Byzantine failures. Let $P_{2f+1}=\{L_{t_0}, L_{t_1} \dots L_{t_x}\}$ be the MKLO with $k= 2f+1$ associated to $m$ and computed on the underlying graph $G=(V, E)$ (if exists) and let $S_{2f+1}$ be size of $P_{2f+1}$. 
If $p_s$ uses SKO or PKO, then $p_s$ is able to compute an upper bound for BL.
Specifically:
$$ BL \le |V|(\delta_{max} + \Delta)) \text{using SKO}$$
$$ BL \le S_{2f+1} (\delta_{max} + \Delta) \text{using PKO}$$
\end{theorem}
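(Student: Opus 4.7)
The plan is to treat the theorem as essentially a packaging result: both upper bounds have already been established as combinatorial/topological facts in Lemmas \ref{lem:c7_1} and \ref{lem:c7_2}, so what remains is to argue that the specific knowledge supplied by each oracle is sufficient for the source $p_s$ to evaluate the right-hand sides. Throughout, I treat the TBER parameters $\Delta$ and $\delta_{max}$ as known to $p_s$, since they are properties of the assumed TVG class rather than instance-specific topological data.

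First I would handle the SKO case. The SKO oracle returns $|V|$. Lemma \ref{lem:c7_1} already establishes unconditionally that $BL \le |V|(\delta_{max} + \Delta)$ for any TBER TVG satisfying the recurring-$\mathsf{RCD}$ assumption. Hence the only thing to state is that, given $|V|$ from the oracle and the known $\delta_{max},\Delta$, the source can evaluate $|V|(\delta_{max}+\Delta)$ and output it as an upper bound on $BL$.

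Next I would handle the PKO case. The PKO oracle returns the underlying static graph $G=(V,E)$. By Lemma \ref{lem:PKO-MKLO}, $p_s$ can compute a MKLO of $G$ with parameter $k=2f+1$ (via the modified BFS of Litsas \emph{et al.}); in particular, if such an MKLO exists, $p_s$ obtains its size $S_{2f+1}$. Invoking Lemma \ref{lem:c7_2} directly then yields $BL \le S_{2f+1}(\delta_{max}+\Delta)$, which $p_s$ can compute from $S_{2f+1}$ together with the parameters $\delta_{max}$ and $\Delta$. This establishes the PKO bound.

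I do not expect a real obstacle: the theorem is a straightforward combination of prior lemmas with an explicit note on what each oracle reveals. The only subtle point worth flagging in the write-up is the implicit assumption that $p_s$ knows the class parameters $\delta_{max}$ and $\Delta$ (otherwise neither bound is computable); this is consistent with how these quantities have been used in Lemmas \ref{lem:c7_1} and \ref{lem:c7_2}. It is also worth remarking for the reader that the PKO bound dominates the SKO bound whenever the MKLO exists, since $S_{2f+1} \le |V|$, which aligns with the stated oracle hierarchy from more powerful to least powerful.
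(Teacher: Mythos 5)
Your proposal matches the paper's own proof, which simply invokes Lemmas \ref{lem:c7_1} and \ref{lem:c7_2}; your additional remarks (using Lemma \ref{lem:PKO-MKLO} to compute the MKLO under PKO, and the implicit assumption that $p_s$ knows $\Delta$ and $\delta_{max}$) just make explicit what the paper leaves implicit. No gap — same approach, slightly more detailed write-up.
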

\begin{proof}
The claim follows from Lemmas \ref{lem:c7_2} and \ref{lem:c7_1}.\hfill$\Box$
\end{proof}

\section{Moving to an Asynchronous System}
\label{sec:asy}
In this work we assumed a synchronous distributed systems.
In this section, we briefly discuss consequences of asynchrony on the safety and liveness of DCPA.

In Section \ref{subsec:cpasafe}, we showed that a reliable and authenticated channel is necessary and sufficient to enforce safety 	through CPA in an \emph{f-locally bounded} failure model. 
Such channel properties are independent of the latency function. Indeed, they require that if a message $m$ sent by a correct process is eventually received at its destination, it has not been compromised by the channel.
As a consequence CPA (and DCPA as well) continues to enforce safety also on asynchronous dynamic networks.

In Section \ref{subsec:cpalive}, we pointed out the need of having channels up long enough to allow the delivery of messages. This imposes constraints on the presence function due to the latency function.
The asynchrony affects the latency function $\zeta(e, t)$ that basically is no more bounded.
This makes impossible (in asynchronous system) to establish constraints for the liveness due to the fact it is no longer guaranteed the propagation of messages. 
It follows that we cannot argue on liveness of reliable broadcast on general TVG without making further assumptions.

In Section \ref{subsec:livespecialtvg} we investigated about liveness in specialised classes of TVG. In particular, we showed in Theorem \ref{th:necPKO} that assuming recurrent \emph{RCD} and having the knowledge on the underlying static graph it is possible to investigate about. 
It follows that, although RCDs are not identifiable over the time, if they are satisfied infinitively often, they enable the verification of liveness also in asynchronous systems.
\section{Conclusion}
\noindent
We considered the reliable broadcast problem in dynamic networks represented by TVG. 
We analyzed the porting conditions enabling CPA to be correctly employed on dynamic networks. The analysis of this simple algorithm is important as it works exploiting only local knowledge.
This contrasts to the best result so far in the same setting~\cite{maurer2015communicating}, that demands an exponential costs to check when a message can be delivered. 
Moreover, we presented necessary and sufficient conditions to ensure safety and liveness DCPA. We analyzed how much knowledge of the TVG is needed to detect whether the liveness condition is satisfied, and its cost. 
Our work is a starting point to identify more general parameters of dynamic networks that guarantees the fulfillment of the conditions we provided, both in a deterministic and probabilistic way.
Other interesting points to address in future works are: i) the definition of a more realistic locally bounded failure model that takes also the time dimension into account, ii) the research of conditions on the dynamic network enabling nodes to conscious termination with just local information. 

\bibliographystyle{splncs04}
\bibliography{references}


\end{document}